\newlist{compactenum}{enumerate}{3}
\setlist[compactenum]{topsep=0pt,partopsep=0pt,itemsep=0pt,parsep=0pt}
\newcommand{\mynote}[3][]{\todo[caption={\sf #3}, color={%
    \ifnum#2=0 green!20
    \else\ifnum#2=1 orange!30
    \else\ifnum#2=2 yellow!20
    \else\ifnum#2=3 cyan!20
    \else magenta!20\fi\fi\fi\fi}, size=\tiny, #1]{\renewcommand{\baselinestretch}{1}\selectfont\sf#3}\xspace}
\numberwithin{equation}{section}
\def\@proofn[#1]{\noindent\textbf{Proof #1:} \ignorespaces}
\def\@@proofn{\noindent\textbf{Proof:} \ignorespaces}
  \def\proofn{\medskip\@ifnextchar[\@proof\@@proof}
\newcommand{\qedhere}{\qed}
\newcommand{\segment}[1]{\paragraph{#1}}
\newcommand\subword{scattered substring\xspace}
\newcommand\infix{substring\xspace}
\newcommand\subwords{scattered substrings\xspace}
\newcommand\infixes{substrings\xspace}
\newcommand{\N}{\mathbb{N}}
\newcommand{\T}{\mathcal{T}}
\newcommand{\Z}{\mathbb{Z}}
\newcommand{\C}{\ensuremath{\mathcal{C}}\xspace}
\newcommand{\D}{\mathcal{D}}
\renewcommand{\S}{\ensuremath{\mathcal{S}}\xspace}
\renewcommand{\L}{\mathcal{L}}
\newcommand{\PTL}{\textsc{PTL}\xspace}
\newcommand{\ptl}{\PTL}
\newcommand{\eps}{\varepsilon}
\newcommand{\trans}[1]{\stackrel{#1}{\longrightarrow}}
\newcommand{\Dclosure}[1]{#1\mathord{\downarrow}}
\newtheorem{theorem}{Theorem}[section]
\newtheorem{lemma}[theorem]{Lemma}
\newtheorem{corollary}[theorem]{Corollary}
\newtheorem{definition}[theorem]{Definition}
\newtheorem{proposition}[theorem]{Proposition}
\newtheorem{example}[theorem]{Example}
\DeclareSymbolFont{bbold}{U}{bbold}{m}{n}
\DeclareSymbolFontAlphabet{\mathbbold}{bbold}
\newcommand\content[1]{\ensuremath{\contentmorphism(#1)}}
\newcommand\arity[1]{\ensuremath{\text{arity}(#1)}}
\newcommand\contentmorphism{\ensuremath{\text{Alph}}}
\newcommand\LangExp[4]{{#1}_0(\exactcontent{#2_1})^{#3}{#1}_1\cdots {#1}_{{#4}-1}(\exactcontent{{#2}_{#4}})^{#3}{#1}_{#4}}
\newcommand\Lang[3]{{\ensuremath{\mathcal{L}}(\overrightarrow{#1},\overrightarrow{#2},#3)}}
\newcommand\exactcontent[1]{\ensuremath{{#1}^{\circledast}}\xspace}
\author{Wojciech Czerwi\'nski\affiliationmark{1}\thanks{Supported by Poland's National Science Centre grant no.\ UMO-2013/11/D/ST6/03075.}
  \and Wim Martens\affiliationmark{2}\thanks{Supported by DFG grant MA 4938/2-1.}
  \and Lorijn van Rooijen\affiliationmark{3}\thanks{Supported by Agence Nationale de la Recherche ANR 2010 BLAN 0202 01 FREC.}\\
  \and Marc Zeitoun\affiliationmark{4}\thanks{Supported by Agence Nationale de la Recherche, ANR 2010 BLAN 0202 01 FREC and ANR-16-CE40-0007 DeLTA.}
  \and Georg Zetzsche\affiliationmark{5}\thanks{Supported by a fellowship within the PostDoc-Program of the German Academic Exchange Service (DAAD) and by
Labex DigiCosme, Univ.\ Paris-Saclay, project VERICONISS.}
}
\title{A Characterization for Decidable Separability by Piecewise Testable
  Languages\thanks{This article is an extended version of \citep{CzerwinskiMRZ-fct15}.}}
\affiliation{
  University of Warsaw\\
  University of Bayreuth\\
  Paderborn University\\
  LaBRI, University of Bordeaux\\
  LSV, CNRS \& ENS Paris-Saclay
}
\begin{document}

\publicationdetails{19}{2017}{4}{1}{1335}

\maketitle

\renewcommand{\thefootnote}{\arabic{footnote}}

\begin{abstract}
  The separability problem for word languages of a class $\C$ by languages of
  a class $\S$ asks, for two given languages~$I$ and~$E$ from $\C$, whether
  there exists a language $S$ from $\S$ that includes $I$ and excludes $E$,
  that is, $I \subseteq S$ and $S\cap E = \emptyset$. In this work, we assume
  some mild closure properties for $\C$ and study for which such classes 
  separability by a piecewise testable language (\ptl) is decidable. We
  characterize these classes in terms of decidability of (two variants of) an
  unboundedness problem. From this, we deduce that separability by \ptl is
  decidable for a number of language classes, such as the context-free
  languages and languages of labeled vector addition systems. Furthermore, it
  follows that separability by \ptl is decidable if and only if one can compute
  for any language of the class its downward closure wrt.\ the \subword
  ordering (\emph{i.e.}, if the set of \subwords of any language of the class
  is effectively regular).
  
  The obtained decidability results contrast some undecidability
  results. In fact, for all (non-regular) language classes that we
  present as examples with decidable separability, it is undecidable
  whether a given language is a \ptl itself.
  
  Our characterization involves a result of independent interest,
  which states that for \emph{any} kind of languages $I$ and $E$,
  non-separability by \ptl is equivalent to the existence of common patterns
  in $I$ and $E$.
\end{abstract}

\section{Introduction}\label{sec:intro}

Given three languages $I,E,S$, we say that $I$ is \emph{separated} from $E$ by
$S$ if $S$ includes $I$ and excludes $E$, that is, $I
\subseteq S$ and $S \cap E = \emptyset$. In this case, we say that $S$
is a \emph{separator}.  We study the separability problem
of a class~$\C$ by a class $\S$:

  \begin{center}
  \begin{tabular}{ll}
    Given: & Two languages $I$ and $E$ from a class $\C$. \\
    Question: & Can $I$ and $E$ be separated by some language from $\S$?
  \end{tabular}
 \end{center}
 When \C is understood, we also say \S-separability for separability of \C by \S.

 Separability is a classical problem in mathematics and computer science that recently
 found much new interest. For example, recent work investigated the separability
 problem of regular languages by piecewise testable languages~\citep{DBLP:conf/mfcs/PlaceRZ13, DBLP:conf/icalp/CzerwinskiMM13}, by several
 other levels in the quantifier alternation hierarchy of first-order
 logic~\citep{DBLP:conf/icalp/PlaceZ14,pseps3,pzdd2}, by locally testable and locally
 threshold testable languages~\citep{DBLP:conf/fsttcs/PlaceRZ13,pvzltt} and by first order
 definable languages~\citep{pzfo,pzfoj}. Another remarkable example goes beyond regular
 languages: the proof of \citet{DBLP:journals/corr/abs-1009-1076} for the decidability
 of reachability for vector addition systems or Petri nets greatly simplifies earlier
 proofs by \citet{Mayr-stoc81,Mayr-sicomp84} and \citet{Kosaraju-stoc82}, by
 establishing a crucial separation result: non-reachability of a marking ``\textit{goal}''
 from a marking ``\textit{start}'' can be witnessed by a semilinear set containing
 all markings reachable from \textit{start} and excluding \textit{goal}.

 In this paper, we focus on the theoretical underpinnings of separation by piecewise testable languages. Our interest in piecewise testable languages is mainly because of
 the following two reasons.  First, piecewise testable languages form a natural
 class in the sense that they only reason about the \emph{order} of symbols. More
 precisely, they are finite Boolean combinations of regular languages of the form
 $A^* a_1 A^* a_2 A^* \cdots A^* a_n A^*$ in which $a_i \in A$ for every
 $i = 1,\ldots,n$. This definition generalizes to tree languages, for which
 \ptl-separability has already been solved by~\citet{sgl-ptlregtree}.  We are
 investigating to which extent piecewise testable languages and fragments thereof can
 be used for computing simple explanations for the behavior of complex systems
 \citep{HofmanM-icdt15}.

 Second, it was shown recently \citep{DBLP:conf/mfcs/PlaceRZ13,
   DBLP:conf/icalp/CzerwinskiMM13} that separability of regular word languages (given
 by non-deterministic automata) by piecewise testable languages is in PTIME, a
 situation which is quite uncommon. The surprising tractability of this problem is another motivation
 for further investigating the class of piecewise testable languages.

\segment{Separation and Characterization.}  For classes $\C$ effectively closed
under complement, separation of $\C$ by $\S$ is a generalization of the
\emph{characterization problem} of $\C$ by $\S$, often also called
\emph{membership problem}, which is defined as follows: for a given language $L$ of $\C$
decide whether $L$ is in $\S$. Indeed, $L$ is in $\S$ if and only if $L$ can be
separated from its complement by a language from $\S$. The characterization
problem is well studied. The starting points were famous works of
\citet{DBLP:journals/iandc/Schutzenberger65a} and
\citet{DBLP:conf/automata/Simon75}, who solved it for the regular
languages by the first-order definable languages and piecewise testable languages, respectively. There were many more results showing that, for regular
languages and subclasses thereof, often corresponding to a logical fragment, the
problem is decidable \citep[see for example][]{BSlocalConf,
  DBLP:journals/jcss/Zalcstein72,
  DBLP:journals/mst/McNaughton74,knast83,DBLP:conf/stacs/Arfi87,
  DBLP:journals/mst/PinW97,DBLP:journals/tcs/Straubing88, twfodeux, gssig2,
  Tesson02diamondsare, KlimaPTL11,DBLP:conf/fossacs/CzerwinskiDLM13,
  DBLP:conf/icalp/PlaceZ14,pzdd2,pseps3,ABOKK_delta_n}. Similar problems have been
considered for
trees~\citep{DBLP:conf/concur/BojanczykI09,Benedikt:2009:RTL:1614431.1614435,
  DBLP:journals/corr/abs-1208-5129,AntonopoulosHMN-icdt12,psfo2}.

Obtaining a decidable characterization for a class is considered as a way to get a fine
understanding of the class. In particular, solving a characterization problem requires
a proof that a language satisfying some decidable property belongs to the class, which
usually yields a canonical construction for the language. For instance, in the case of
a logical class, this gives a canonical sentence defining any input language that
fulfills the condition we want to prove as a decidable characterization. Recently,
several generalizations of the characterization problem have been introduced as a means
to obtain even more information about the class $\S$ under study. Such generalizations
amount to \emph{approximating} languages of $\C$ by languages of~$\S$
\citep{pz-covers:mfcs16}. In turn, such information may be exploited when studying more
complex classes, \emph{e.g.}, classes that are higher in the quantifier alternation
hierarchy~\citep{DBLP:conf/icalp/PlaceZ14}.
Separation is the simplest of such approximation problems: it asks to
over-approximate the first language,~$I$, by a language $S$ in $\C$, while the second
language, $E$, serves as a quality measure of  this approximation.

\segment{Beyond regular languages.}
To the best of our knowledge, all the above work and in general all the
decidable characterizations were obtained in cases where $\C$ is the class of
regular languages, or a subclass of it.  This could be due to several negative
results which may seem to form a barrier for any nontrivial decidability beyond
regular languages.  In this work, we consider language classes beyond the
regular languages, but we assume them to be full trios (meaning they satisfy
some mild closure properties). Beyond the regular languages, we quickly
encounter undecidability of the problems above.  For instance, for a
context-free language (given by a grammar or a pushdown automaton) it is
well known that it is undecidable to determine whether it is a regular
language, by \citeauthor{Greibach-mst68}'s theorem (\citeyear{Greibach-mst68}).
In the same way, one can show that for every full trio that contains the
language $\{a^n b^n \mid n\ge 0\}$, it is undecidable to determine whether a
given language of the full trio is piecewise testable (see Section~\ref{sec:undecidable-classes}).

In the case of context-free languages, there is a strong connection
between the intersection emptiness problem and
separability. Trivially, testing intersection emptiness of two given
context-free languages is the same as deciding if they can be
separated by some context-free language. However, in general, the
negative result is even more overwhelming.
\citet{SzymanskiW-sicomp76} proved that separability of
context-free languages by \emph{regular} languages is undecidable. This was
then generalized by \citet{DBLP:journals/jacm/Hunt82a}, who proved
that separability of context-free languages by \emph{any} class containing
all the \emph{definite languages} is undecidable. A language $L$ is
\emph{definite} if it can be written as $L = F_1 A^* \cup F_2$, where
$F_1$ and $F_2$ are finite languages over alphabet $A$. As such, for
definite languages, it can be decided whether a given word $w$ belongs
to $L$ by looking at the prefix of $w$ of a given fixed length. The
same statement holds for \emph{reverse definite} languages, in which
we are looking at suffixes.  Containing all the definite, or reverse
definite, languages is a very weak condition.  Note that if a logic
can test what is the $i$-th symbol of a word and is closed under Boolean
combinations, it can already define all the definite languages.  In his
paper, Hunt~III makes an explicit link between intersection emptiness and
separability. Hunt~III writes: \emph{%
  ``We show that separability is undecidable in general for the same
  reason that the emptiness-of-intersection problem is
  undecidable. Therefore, it is unlikely that separability can be used
  to circumvent the undecidability of the emptiness-of-intersection
  problem.''}


\segment{Our Contribution.} In this paper, we show that the above mentioned
quote does not apply for separability by a piecewise testable language
(\ptl): we characterize those full trios for which separability by a \ptl is
decidable. A \emph{full trio} is a nonempty language class that is closed
under rational transductions, or, equivalently by Nivat's
theorem~\citep[see][]{Nivat_1968,Berstel:Transductions-context-free-languages:1979:a},
closed under direct and inverse homomorphic images of free monoids,
and under intersections with regular languages.

Our characterization states that separability by \ptl is decidable if and only
if (one of two variants of) an unboundedness problem is decidable. This yields
decidability for a range of language classes, such as those with effectively
semilinear Parikh images and the languages of labeled vector addition systems.
In particular, this means that separability by \ptl is decidable for
context-free languages, and, according to very recent
results, also for the languages of higher-order
pushdown automata~\citep{HagueKochemsOng2016} and even higher-order recursion schemes~\citep{ClementeParysSalvatiWalukiewicz2016}.

The two variants of the unboundedness problem are called
\emph{diagonal problem} and \emph{simultaneous unboundedness problem
(SUP)}, of which the latter is ostensibly easier than the former.  Our
reduction of separability by \ptl consists of three steps:
\begin{itemize}
\item[--] In the first step, we show that (arbitrary) languages $I$ and $E$
  are \emph{not} separable by \ptl if and only if they possess a certain \emph{common
    pattern}.
\item[--]  In the second step we use this fact to reduce the \ptl-separation
  problem to the diagonal problem.
\item [--] In the last step, we employ ideal decompositions for downward closed languages to
  reduce the diagonal problem to the SUP.
\end{itemize}

\smallskip
For the converse direction, we directly reduce the SUP to separability by \ptl.
\medskip

Another consequence of our characterization is a connection to the
problem of \emph{computing downward closures}. It is well known that
for any language $L$, its \emph{downward closure}, \emph{i.e.},~the set of
\subwords of members of $L$, is a regular
language~\citep{Haines1969}. This is a straightforward consequence of Higman's
Lemma~\citep{Higman_1952}. However, given a language $L$, it is not
always possible to compute a finite automaton for the downward closure
of $L$. Since the downward closure appears to be a useful abstraction,
this raises the question of when we can compute it. Our
characterization here implies that for each full trio $\S$, downward
closures of languages of $\S$ are computable if and only if separability of
languages of $\S$ by \ptl is decidable.

A curiosity of this work is perhaps the absence of algebraic methods, as most
decidability results we are aware of have considered syntactic monoids of regular
languages and investigated properties thereof. In the setting of this paper, the
situation is different since we want to separate input languages that are not regular
(hence, whose syntactic monoid is infinite), such as context-free languages. This would
make it difficult to design any algebraic framework for them. However, the main reason
why we do not rely on syntactic monoids is that the characterization and separation
problems are of very different nature. For characterization, whether the input language
belongs to $\C$ indeed boils down to checking a property of the syntactic (ordered)
monoid for the classes $\C$ that have been investigated so far, namely, (positive)
varieties of regular languages. For separation however, the language we are looking for
in the class $\C$ is \emph{not} one of the inputs, so that a given language from $\C$
can serve as a separator for several unrelated instances of the $\C$-separation
problem, and these instances may well have completely different syntactic properties.

%

%



\segment{Structure of the paper} In Section~\ref{sec:prelim} we introduce basic notions
and notation, the diagonal and SUP problems, and state our main result, which connects
these problems with separation by \ptl. Section~\ref{sec:patterns} gives a simple
criterion by ``common patterns'' for two arbitrary languages to be separable by~\ptl.
Section~\ref{sec:separability} is devoted to proving equivalent conditions for
separability by \ptl: the existence of common patterns, the diagonal problem, the SUP
problem, and the computation of downward closures. In Section~\ref{sec:feasibility} we
discuss applications of our result, in particular we show that it applies to
context-free languages and languages of labeled vector addition systems (alternatively,
languages of labeled Petri nets). Finally, in Section~\ref{sec:undecidable-classes}, we
present some classes for which all these problems are undecidable.



\section{Preliminaries and Main Results}\label{sec:prelim}
We assume the reader to be familiar with regular expressions and regular
languages. In this section, we first set the notation and state our main
results.

The set of all integers and nonnegative integers are denoted by $\Z$ and $\N$
respectively. An \emph{alphabet} is a finite nonempty set, which we usually
denote with $A$, $B$, $C$, \ldots\ or indexed versions thereof. We refer to elements of $A$
as \emph{symbols}.

A \emph{word} is a (possibly empty) concatenation $w = a_1 \cdots a_n$ of
symbols $a_i$ that come from an alphabet~$A$. A \emph{\infix} of $w$ is a
sequence $a_j a_{j+1} \cdots a_{j+k}$ of consecutive symbols of $w$.  The
\emph{length} of a word $w=a_1\cdots a_n$ is $n$, that is, the number of its
symbols. The \emph{alphabet of a word $w=a_1\cdots a_n$} is the set
$\{a_1,\ldots,a_n\}$ and is denoted $\content{w}$. The empty word, of length 0
and of alphabet $\emptyset$, is denoted by $\varepsilon$.

For a subalphabet $B \subseteq A$, a word $v \in A^*$ is a \emph{$B$-\subword} of $w$, denoted
$v \preceq_B w$, if $v$ can be obtained from $w$ by removing symbols from $B$, that is, if
$v = b_1 \cdots b_m$ and $w \in B^* b_1 B^* \cdots B^* b_m B^*$. Notice that we do not require that
$\{b_1,\ldots,b_m\} \subseteq B$ or $B \subseteq \{b_1,\ldots,b_m\}$. We simply refer to
$A$-\subwords as \emph{\subwords} and refer to the relation $\preceq_A$ as the \emph{\subword}
relation, denoted by $\preceq$. A regular word language over alphabet $A$ is a \emph{piece language}
if it is of the form $A^* a_1 A^* \cdots A^* a_n A^*$ for some $a_1, \ldots, a_n \in A$, that is, it is the
set of words having $a_1 \cdots a_n$ as a \subword.  A regular language is a \emph{piecewise testable}
language (\ptl) if it is a finite Boolean combination of piece languages. We denote the class
of all piecewise testable languages also by \ptl.

\subsection{Separability and Common Patterns}

The first main result of the paper proves that 
two (not necessarily regular) languages are not separable by \ptl if and only if they have a common
pattern. We now make this more precise.

A \emph{factorization pattern} is an element of
$(A^*)^{p+1}\times(2^A\setminus\emptyset)^p$ for some $p\geq0$. In other terms, if
$(\overrightarrow{u}, \overrightarrow{B})$ is such a factorization pattern,
there exist words $u_0,\ldots,u_p\in A^*$ and nonempty alphabets
$B_1,\ldots,B_p\subseteq A$ such that $\overrightarrow{u}=(u_0,\ldots,u_p)$ and
$\overrightarrow{B}=(B_1,\ldots,B_p)$.
For $B\subseteq
A$, we denote by $\exactcontent B$ the set of words with alphabet
exactly $B$, that is,
\[
  \exactcontent{B} = \{ w \in B^* \mid \content{w} = B\}.
\]
Given a factorization pattern $(\overrightarrow{u},\overrightarrow{B})$ with
$\overrightarrow{u}=(u_0,\ldots,u_p)$ and $\overrightarrow{B}=(B_1,\ldots,B_p)$, define
\begin{equation*}
  {\Lang u B n =\LangExp  u B n p.}
\end{equation*}
\noindent In other terms, in a word of $\Lang u B n$, the infix between $u_{k-1}$ and $u_k$
is required to be the concatenation of $n$ words over $B_k$, each containing \emph{all}
symbols of $B_k$ (for each $1\leq k\leq p$).
An infinite sequence $(w_n)_n$ is said to be \emph{$(\overrightarrow{u},\overrightarrow{B})$-adequate} if
\begin{equation*}
  \forall n \in \N,\ w_n\in\Lang u B n.
\end{equation*}

\begin{example}
  \emph{As an example, consider the factorization pattern $(\overrightarrow{u},\overrightarrow{B})$ with $\overrightarrow{u}=(\varepsilon,c,\varepsilon)$ and $\overrightarrow{B}=(\{a,b\}, \{a\})$. Then, $\Lang u B n = (\exactcontent{\{a,b\}})^n c (\exactcontent{\{a\}})^n$.
    A sequence of words $w_1, w_2, \ldots$ is $(\overrightarrow{u},\overrightarrow{B})$-adequate if first of all
    for every $n \in \N$, there is a single symbol $c$ in word $w_n$.
    Second of all, after this symbol $c$ there are only symbols $a$ and, moreover, at least $n$ of them.
    Finally, the prefix of $w_n$ before the distinguished symbol $c$ contains only symbols $a$ and $b$ and can be split
    into at least $n$ words such that all of them contain at least one symbol $a$ and at least one symbol $b$.}
\end{example}
Finally, we say that language $L$ \emph{contains the pattern $(\overrightarrow{u},
  \overrightarrow{B})$} if there exists an infinite sequence of words $(w_n)_n$ in $L$ that is $(\overrightarrow{u},\overrightarrow{B})$-adequate.
We can now formally state our first main theorem which we will prove in Section~\ref{sec:patterns}:
\begin{theorem}  \label{lem:common-ub-path-2} \label{theo:patterns}
  Two word languages $I$ and $E$ are not separable by \ptl if and only
  if they contain a common pattern $(\overrightarrow{u},
  \overrightarrow{B})$.
\end{theorem}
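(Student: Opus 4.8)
The plan is to route everything through the Simon congruence $\sim_k$, where $w\sim_k w'$ means that $w$ and $w'$ have the same set $\mathrm{Sub}_k(w)$ of \subwords of length at most $k$. Only the elementary half of Simon's theory is needed here: a piece language of length $\ell$ is a union of $\sim_\ell$-classes, so every \ptl is $\sim_k$-saturated for some $k$; conversely, since $\sim_k$ has finite index and each of its classes is a Boolean combination of piece languages of length at most $k$, every $\sim_k$-saturated language is a \ptl. This gives the reformulation that $I$ and $E$ are \emph{not} separable by \ptl if and only if for every $k$ there exist $w_k\in I$ and $w'_k\in E$ with $w_k\sim_k w'_k$: if a \ptl separator exists it is $\sim_k$-saturated for some $k$ and cannot contain a word of $I$ that is $\sim_k$-equivalent to a word of $E$, while if for some $k$ no such pair exists then the $\sim_k$-saturation of $I$ is itself a \ptl separator. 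It therefore suffices to prove that the existence of $\sim_k$-equivalent pairs for all $k$ is equivalent to the existence of a common pattern.

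For the direction from patterns to equivalent pairs, let $(w_n)_n\subseteq I$ and $(w'_n)_n\subseteq E$ be $(\overrightarrow u,\overrightarrow B)$-adequate. The key observation is that for $n\ge k$ the set $\mathrm{Sub}_k(w)$ is the \emph{same} for every $w\in\Lang{u}{B}{n}$: writing $w=u_0x_1u_1\cdots x_pu_p$ with $x_i\in(\exactcontent{B_i})^n$, every \subword of $x_i$ is a word over $B_i$, and conversely every word over $B_i$ of length at most $k\le n$ embeds into $x_i$ by taking one symbol from each of the $n$ full-alphabet blocks; hence the length-$\le k$ \subwords of $x_i$ are exactly the words over $B_i$ of length at most $k$, independently of the actual $x_i$. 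Consequently $\mathrm{Sub}_k(w)$ depends only on $\overrightarrow u$, $\overrightarrow B$, and $k$, so taking the $k$-th terms $w_k\in\Lang{u}{B}{k}$ and $w'_k\in\Lang{u}{B}{k}$ yields $w_k\sim_k w'_k$ with $w_k\in I$ and $w'_k\in E$, as required.

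For the converse — the hard direction — I must manufacture a single pattern from the family of equivalent pairs $(w_k,w'_k)$. Call a pattern $(\overrightarrow u,\overrightarrow B)$ \emph{$n$-good} for a language $L$ if $L\cap\Lang{u}{B}{n}\ne\emptyset$, and observe that $L$ contains the pattern precisely when it is $n$-good for all $n$. Merging two adjacent full-alphabet blocks again yields a full-alphabet block, so $\Lang{u}{B}{n+1}\subseteq\Lang{u}{B}{n}$ and $n$-goodness is monotone in $n$; thus the sets $G_n$ of patterns that are $n$-good for both $I$ and $E$ form a decreasing chain, and a common pattern is exactly an element of $\bigcap_n G_n$. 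The argument then has two parts. First, each $G_n$ is nonempty: fixing $n$ and taking $k$ large, I assign to every sufficiently long word a canonical level-$n$ block factorization $u_0x_1u_1\cdots x_pu_p$ (greedily carving out maximal infixes lying in some $(\exactcontent{B})^n$, whose very existence in long words over the fixed alphabet is a Ramsey/pigeonhole phenomenon), read off its pattern, and prove that this canonical pattern is invariant under $\sim_k$ as soon as $k\ge k_0(n)$; applied to $w_k\sim_k w'_k$ this produces one pattern that is simultaneously $n$-good for $I$ (witnessed by $w_k$) and for $E$ (witnessed by $w'_k$), while short words are handled by the fact that $w\sim_k w'$ with $k$ exceeding their length forces $w=w'$. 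Second, from $G_n\ne\emptyset$ for all $n$ together with the chain property $G_{n+1}\subseteq G_n$, I extract an element of $\bigcap_n G_n$ by a compactness argument.

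The main obstacle is this last compactness step, and specifically the control of the glue words $u_i$. A pattern may carry arbitrarily long fixed words $u_i$, so the naive tree of candidate patterns is infinitely branching and a decreasing chain of such sets can have empty intersection; the remedy is Higman's lemma, which well-quasi-orders $A^*$ under the \subword relation and lets me pass to a subsequence of chosen patterns $P_n\in G_n$ along which the alphabets $\overrightarrow B$ stabilize (there are finitely many) and the glue words embed monotonically, so that a single limiting pattern lies below all of them and, by monotonicity of $n$-goodness, remains $n$-good for every $n$. The other delicate point is the invariance of the canonical level-$n$ factorization under $\sim_k$ from the first part: this is exactly the place where "$\sim_k$-equivalence of two words forces a shared constant-alphabet block structure of unbounded multiplicity," and pinning down the quantitative dependence $k_0(n)$ between the piecewise index $k$ and the required block multiplicity $n$ is what makes the whole equivalence go through.
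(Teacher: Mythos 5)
Your reduction of \ptl-inseparability to the existence of $\sim_k$-equivalent pairs, and your ``pattern $\Rightarrow$ inseparable'' direction, coincide with the paper's. The hard direction, however, has two genuine gaps. First, the nonemptiness of each $G_n$ rests entirely on the claim that a greedily defined ``canonical level-$n$ block factorization'' of a word yields a pattern invariant under $\sim_k$ for $k\ge k_0(n)$. You acknowledge this is delicate but never prove it, and it is the actual content of the theorem; it is also doubtful as stated, since a greedy factorization is a positional, syntactic object (which maximal infix gets carved out depends on where symbols sit), whereas $\sim_k$ constrains only the set of \subwords. The paper avoids canonically factorizing a single word: it applies Simon's Factorization Forest Theorem (Lemma~\ref{lem:extract-adequate-subseq}) to extract from each of the two sequences an \emph{adequate} subsequence --- i.e.\ one proper pattern per language that is $n$-good for all $n$ simultaneously --- and then proves a rigidity statement (Lemma~\ref{lem:samepatt}): two proper patterns admitting adequate sequences linked by $v_n\sim_n w_n$ must be \emph{equal}, via a leftmost-embedding argument.

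Second, the compactness step fails as described. You correctly have $G_{n+1}\subseteq G_n$ (monotonicity in $n$), but to pass from a choice $P_n\in G_n$ to an element of $\bigcap_n G_n$ you invoke monotonicity of $n$-goodness under replacing glue words by \subwords, and that is false: if $u_i\preceq u_i'$ componentwise there is in general no inclusion between $\Lang{u}{B}{n}$ and $\Lang{u'}{B}{n}$ in either direction. For instance with $u_0'=c$, $u_0=\varepsilon$, $B_1=\{a\}$ and $u_1=u_1'=\varepsilon$, the first pattern language is $c\,a^{\ge n}$ and the second is $a^{\ge n}$; a witness for the first is never a witness for the second. So Higman's lemma hands you a subsequence along which the glue words embed, but the pattern ``below all of them'' need not be $m$-good for any $m$ beyond the first index, and no limiting pattern is produced. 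The paper's rigidity lemma is exactly what replaces this compactness argument: the extracted pattern is unique, so no limit needs to be taken.
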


\subsection{Characterizations for Decidable Separability}
The second main result is a set of characterizations that say, for \emph{full
  trios}, when
separability by piecewise testable languages is decidable. Full
trios, also called \emph{cones}, are language classes that are closed
under three operations that we recall next~\citep{Berstel:Transductions-context-free-languages:1979:a,GinsburgG-swat67}.

Fix a language $L$ over alphabet $A$.
For an alphabet $B$, the \emph{$B$-projection} of a word
is its longest \subword
consisting of symbols from $B$. The \emph{$B$-projection} of a language $L$ is the
set of all $B$-projections of words belonging to $L$. Therefore, the
$B$-projection of $L$ is a language over alphabet $A \cap B$. The
\emph{$B$-upward closure} of a language $L$ is the set
of all words that have a $B$-\subword in $L$, \emph{i.e.},
\begin{equation*}
  \bigl\{w \in (A \cup B)^* \mid \exists v \in L \text{ such that } v \preceq_B w\bigr\}.
\end{equation*}

\noindent In other words, the $B$-upward closure of $L$ consists of
all words that can be obtained by taking a word in~$L$ and padding it
with symbols from $B$.

A \emph{language class} is a collection of languages that contains at
least one nonempty language.  A language class $\C$ is \emph{closed}
under an operation $\textsc{op}$ if $L \in \C$ implies that
$\textsc{op}(L) \in \C$.  We use the term \emph{effectively closed}
if, furthermore, the representation of $\textsc{op}(L)$ can be
effectively computed from the representation of $L$. A class $\C$ of
languages is a \emph{full trio} if it is effectively
closed under:
\begin{compactenum}[1.]
\item $B$-projection for every finite alphabet $B$,
\item $B$-upward closure for every finite alphabet $B$, and
\item intersection with regular languages.
\end{compactenum}

\smallskip We note that full trios are usually defined differently, either through
closures under direct and inverse images of morphisms and intersection with
regular languages, or through closure under rational transductions
\citep{GinsburgG-swat67,Berstel:Transductions-context-free-languages:1979:a}.
However, we use the above mentioned properties in the proofs, which are easily
seen to be equivalent.

We will now introduce three problems whose decidability or computability will
be equivalent to decidability of separability by piecewise testable languages:
the \emph{diagonal problem}, the \emph{simultaneous unboundedness problem}, and
the \emph{downward closure problem}.

\segment{Downward Closure Problem.}

For a language $L\subseteq A^*$, its \emph{downward closure $\Dclosure{L}$} is
defined as the set of all \subwords of words in $L$, that is,
$$\Dclosure{L} = \{ u\in A^* \mid \exists v\in L\colon u\preceq
v\}.$$ We say that $L$ is downward closed if $L=\Dclosure{L}$.
It is well-known that the downward closure of any language $L$ is regular
\citep{Haines1969}. This is a direct consequence of Higman's
Lemma~\citep{Higman_1952}, which states that $\preceq$ is a well quasi
ordering\footnote{Actually, the relation $\preceq$ that we defined over $A^{*}$
  is even a well \emph{ordering}.} (WQO) on~$A^*$, \emph{i.e.}, that any infinite
sequence of words admits an infinite $\preceq$-increasing subsequence. Indeed,
the complement of a downward closure is closed under $A$-upward closure. It
follows from Higman's Lemma that this complement has a finite number of $\preceq$-minimal
elements and is therefore a finite union of piece languages.

\smallskip
Downward closed languages play an important role in the theory of lossy
channel systems~\citep{Finkel_Wsts87,LossyAJ96,wsts01}, which feature
communication channels where messages can be dropped arbitrarily. These have
been investigated extensively, because the WQO property of the subword
relation yields decidability results that fail in the non-lossy case. However,
the complexity of such decidable problems may be huge
\citep{lossy_nonprim,ShSchpower13}.

Moreover, the downward closure is a useful abstraction of languages: Suppose a
language $L$ describes the set of action sequences of a system modeled, for
example, by a vector addition system or a pushdown automaton. If this system is
observed via a lossy channel, then $\Dclosure{L}$ is the set of sequences seen
by the observer~\citep{HabermehlMeyerWimmel2010}. Furthermore, computing a
regular representation of $\Dclosure{L}$ for a given language $L$ is in many
situations sufficient for safety verification of parametrized asynchronous
shared-memory systems~\citep{LaTorreMuschollWalukiewicz2015}.

Starting from a regular language given, \emph{e.g.},~by a finite automaton, one can
easily compute a representation of its downward closure (one can even obtain
precise upper and lower bounds in terms of the size of an automaton recognizing
it~\citep{Karandikar_2015}). 
However, despite the fact that the
downward closure of a language is always a simple regular language (the
complement of a union of piece languages), it is not always possible to
effectively compute a finite automaton for $\Dclosure{L}$ given a description
of a (possibly nonregular) language $L$. Such negative results are known, for example, for Church-Rosser
languages~\citep{GruberHolzerKutrib2007} and reachability sets of lossy channel
systems~\citep{Mayr2003}. We say that \emph{downward closures are computable
for $\C$} if a finite automaton for $\Dclosure{L}$ can be algorithmically
computed for every language $L$ in $\C$.

\segment{Unboundedness Problems.}
Let $A = \{a_1, \ldots, a_n\}$ be ordered with $a_1<\cdots<a_n$. For a symbol $a \in A$ and a word $w
\in A^*$, let $\#_a(w)$ denote the number of occurrences of $a$ in
$w$.  The \emph{Parikh image} of a word $w$ is the $n$-tuple
$(\#_{a_1}(w), \ldots, \#_{a_n}(w))$.  The \emph{Parikh image} of a
language $L$ is the set of all Parikh images of words from $L$. A
tuple $(m_1,\ldots,m_n) \in \N^n$ is \emph{dominated} by a tuple
$(d_1,\ldots,d_n) \in \N^n$ if $d_i \geq m_i$ for every $i =
1,\ldots,n$.
\begin{definition}
  The \emph{diagonal problem for $\C$} is the following decision problem:
  \begin{description}
  \item[Input] A language $L\subseteq A^*$ from $\C$.
  \item[Question] Is each tuple $(m, \ldots, m)\in\N^n$ dominated by some tuple in the Parikh image of $L$?
  \end{description}
\end{definition}
Equivalently, the diagonal problem for $\C$ asks whether there are \emph{infinitely
many} tuples  $(m, \ldots, m)$ dominated by some tuple in the Parikh image of $L$.

The \emph{simultaneous unboundedness problem (SUP)} is a restricted version of
the diagonal problem where the input is a language in which the symbols are
grouped together:
\begin{definition} The \emph{simultaneous unboundedness problem (SUP) for $\C$} is the
  following decision problem:
  \begin{description}
  \item[Input] A language $L\subseteq a_1^*\cdots a_n^*$ \,from $\C$, for
    some ordering $a_1,\ldots,a_n$ of $A$.
  \item[Question]  Is each tuple $(m, \ldots, m)\in\N^n$ dominated by some tuple in the Parikh image of $L$?
  \end{description}
\end{definition}
In other words, the SUP asks
whether
$\Dclosure{L}=a_1^*\cdots a_n^*$.


\segment{Characterizations.}
Perhaps surprisingly, our second main result here implies that in a
full trio, separability by \ptl is decidable if and only if downward
closures are computable.  Our proof relies on the following
characterization of computability of downward closures:
\begin{theorem}[\citeauthor{Zetzsche-icalp15}, \citeyear{Zetzsche-icalp15}]\label{theo:dc}
  Let $\C$ be a full trio. Then downward closures are computable for
  $\C$ if and only if the SUP is decidable for $\C$.
\end{theorem}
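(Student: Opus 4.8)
The plan is to prove the two directions separately; the forward direction is routine and the converse carries all the content. For ``downward closures computable $\Rightarrow$ SUP decidable'', given an input $L\subseteq b_1^*\cdots b_n^*$ to the SUP I would compute a finite automaton for $\Dclosure{L}$ and test whether $\Dclosure{L}=b_1^*\cdots b_n^*$. Since $L\subseteq b_1^*\cdots b_n^*$ forces $\Dclosure{L}\subseteq b_1^*\cdots b_n^*$, this is an equality test between two effectively given regular languages, hence decidable. The real work is the converse.

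For ``SUP decidable $\Rightarrow$ downward closures computable'', I would first reduce the whole computation to a single primitive: given an ideal $I$, written as a product $C_0^*d_1C_1^*\cdots d_kC_k^*$ with $C_i\subseteq A$ and $d_i\in A$, decide whether $I\subseteq\Dclosure{L}$. The key observation is that $C_i^*$ is directed with cofinal sequence $y_i^{(n)}=(c_{i,1}\cdots c_{i,r_i})^n$, where $C_i=\{c_{i,1},\ldots,c_{i,r_i}\}$; hence $I=\bigcup_n\Dclosure{z_n}$ for $z_n=y_0^{(n)}d_1y_1^{(n)}\cdots d_ky_k^{(n)}$. As $\Dclosure{L}$ is downward closed, this yields that $I\subseteq\Dclosure{L}$ iff for every $n$ there is $v\in L$ with $z_n\preceq v$, an unboundedness statement. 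I would encode it as a SUP instance over a product alphabet $b_1^*\cdots b_{k+1}^*$ with one counter per region $C_i$: using closure under rational transductions (equivalently the projection, upward-closure and intersection operations), I build $M\in\C$ that nondeterministically marks matched occurrences of $d_1,\ldots,d_k$ in order inside a word $v\in L$, and in each region emits one $b_{i+1}$ per complete block of $C_i$ read as a \subword. Then $\Dclosure{M}=b_1^*\cdots b_{k+1}^*$ iff $(n,\ldots,n)$ is dominated for all $n$ iff $I\subseteq\Dclosure{L}$, so the SUP decides ideal-containment.

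The hard part is the block-counting step, because $y_i^{(n)}\preceq v_i$ is strictly stronger than a condition on the letter-counts of the region $v_i$ (for instance $abab\not\preceq aabb$ even though both letters occur twice). I would handle this by letting the transducer run the greedy leftmost cyclic matcher for the pattern $c_{i,1}c_{i,2}\cdots c_{i,r_i}$: its state records the current position in the cycle, and it outputs $b_{i+1}$ exactly when a cycle completes. Greedy leftmost matching computes precisely the maximal number of disjoint in-order copies, so the emitted count equals the largest $n$ with $y_i^{(n)}\preceq v_i$, which is exactly what the reduction needs; letters outside $C_i$ that do not serve as separators are simply skipped, matching the semantics of $\preceq$. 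Verifying that this nondeterministic device is realizable by the three full-trio operations, and that the union over all its runs captures the maximum over all choices of separator positions, is where I expect the bookkeeping to be most delicate.

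Finally, to assemble a finite automaton for $\Dclosure{L}$, I would first observe that emptiness is decidable for $\C$ under the SUP hypothesis: sending $L'$ through the rational transduction $A^*\times c^*$ produces $M\subseteq c^*$ with $\Dclosure{M}=c^*$ iff $L'\neq\emptyset$. This makes $L\subseteq D$ decidable for any downward-closed regular $D$ (test emptiness of $L\cap\overline{D}\in\C$), and $L\subseteq D$ is equivalent to $\Dclosure{L}\subseteq D$ since $D$ is downward closed. I would then enumerate all downward-closed regular languages $D$ (finite unions of ideals, effectively enumerable), and for each test both $L\subseteq D$ and $D\subseteq\Dclosure{L}$, the latter by decomposing $D$ into ideals and applying the primitive above to each. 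The first $D$ passing both tests satisfies $D=\Dclosure{L}$, and since $\Dclosure{L}$ itself is downward-closed regular it eventually appears in the enumeration, so the search terminates with the correct automaton.
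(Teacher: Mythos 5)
The paper does not actually prove this theorem: it is imported from the cited reference \citep{Zetzsche-icalp15}, so there is no in-paper proof to compare against. Your argument is a correct reconstruction of the approach taken in that reference --- the easy direction by an equality test on effectively given regular languages, and the converse by enumerating downward-closed regular candidates $D$ (finite unions of ideals), deciding $L\subseteq D$ via decidable emptiness (itself obtained from a one-letter SUP instance), and deciding $D\subseteq\Dclosure{L}$ ideal by ideal through a SUP instance produced by a transducer that guesses the separator positions and greedily counts completed $C_i$-blocks, the greedy leftmost matcher correctly computing $\max\{m : y_i^{(m)}\preceq v_i\}$. Only two cosmetic points deserve tightening: the separators in an ideal are optional ($\{d_i,\varepsilon\}$ rather than $d_i$), which your identity $I=\bigcup_n\Dclosure{z_n}$ absorbs anyway since each $\Dclosure{z_n}$ is downward closed; and regions with $C_i=\emptyset$ should simply not receive a counter, lest the SUP instance trivially fail.
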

We are now ready to state the second main result, which is an
extension of Theorem~\ref{theo:dc} and connects it to separability:
\begin{theorem}\label{theo:decidability}
  For each full trio $\C$, the following are equivalent:
  \begin{enumerate}[label=(\arabic*), ref=(\arabic*)]
  \item\label{theo:it:sep} Separability of $\C$ by \ptl is decidable.
  \item\label{theo:it:diag} The diagonal problem for $\C$ is decidable.
  \item\label{theo:it:sup} The SUP for $\C$ is decidable.
  \item\label{theo:it:dc} Downward closures are computable for $\C$.
  \end{enumerate}
\end{theorem}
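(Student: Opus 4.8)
The plan is to establish the equivalence of the four statements by proving a cycle of implications, leaning heavily on the two results already available to us: Theorem~\ref{theo:patterns} (non-separability equals a common pattern) and Theorem~\ref{theo:dc} (computability of downward closures equals decidability of the SUP). Since Theorem~\ref{theo:dc} already gives us \eqref{theo:it:sup}~$\Leftrightarrow$~\eqref{theo:it:dc} for free, the real work is to weave separability and the diagonal problem into this equivalence. I would organize the argument as the chain \eqref{theo:it:sep}~$\Rightarrow$~\eqref{theo:it:diag}~$\Rightarrow$~\eqref{theo:it:sup}~$\Rightarrow$~\eqref{theo:it:sep}, and then invoke Theorem~\ref{theo:dc} to close the loop to~\eqref{theo:it:dc}. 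Because $\C$ is a full trio, I may freely apply $B$-projection, $B$-upward closure, and intersection with regular languages, effectively, to any language in~$\C$, and these closure properties are the engine that lets me transform one problem instance into another.

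For \eqref{theo:it:sep}~$\Rightarrow$~\eqref{theo:it:diag}, the idea is to reduce an instance $L \subseteq A^*$ of the diagonal problem, with $A = \{a_1,\dots,a_n\}$, to a separability instance. Intuitively, the diagonal is unbounded precisely when $L$ contains a pattern that forces arbitrarily many repetitions of all symbols simultaneously. I would construct from $L$ two languages $I$ and $E$ in $\C$ (using the trio operations) so that $I$ and $E$ are \ptl-separable if and only if the diagonal of $L$ is bounded. The natural choice is to let the pattern $(\overrightarrow{u},\overrightarrow{B})$ with a single block $B_1 = A$ and $u_0 = u_1 = \eps$ be the witness: by Theorem~\ref{theo:patterns}, $I$ and $E$ fail to be separable exactly when they share such a pattern, and I would arrange $E$ to be a fixed simple language (e.g.\ capturing ``all symbols appear together $n$ times'') so that a common pattern with $I$ forces the diagonal of $L$ to be unbounded. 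For \eqref{theo:it:diag}~$\Rightarrow$~\eqref{theo:it:sup}, the SUP is literally the diagonal problem restricted to inputs inside $b_1^*\cdots b_n^*$, so this implication is immediate: any algorithm for the diagonal problem solves the SUP without modification.

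The heart of the matter, and the implication I expect to be the main obstacle, is \eqref{theo:it:sup}~$\Rightarrow$~\eqref{theo:it:sep}: using a decision procedure for the SUP to decide \ptl-separability. The plan follows the three-step reduction sketched in the introduction, but run via the SUP directly. Given $I$ and $E$ in $\C$, I want to decide whether they share a common pattern $(\overrightarrow{u},\overrightarrow{B})$. The difficulty is that the space of candidate patterns is infinite (the words $u_0,\dots,u_p$ are unbounded), so I cannot merely enumerate patterns and test each one. The right move is to reformulate the existence of a common pattern as a single unboundedness question about a language built from $I$ and $E$ by the trio operations, and then, using ideal decompositions of downward closed languages (as the introduction indicates), bring this question into the grouped form $b_1^*\cdots b_n^*$ on which the SUP is defined. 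Concretely, I expect to encode each block $\exactcontent{B_k}$ and each connecting word $u_k$ into tracks and to read off the existence of an adequate sequence $(w_n)_n$ in both $I$ and $E$ as the statement that a certain Parikh diagonal over these tracks is unbounded; the technical labor is in making the encoding respect the ``exactly $B_k$'' and ``$n$-fold concatenation'' constraints while keeping the construction inside~$\C$. Once this reduction to the SUP is in place, a decision procedure for the SUP decides the common-pattern question, and Theorem~\ref{theo:patterns} converts this into a decision procedure for \ptl-separability. Finally, \eqref{theo:it:sup}~$\Leftrightarrow$~\eqref{theo:it:dc} is exactly Theorem~\ref{theo:dc}, which completes the equivalence.
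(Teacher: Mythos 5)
Your high-level architecture (a cycle of implications closed off by Theorem~\ref{theo:dc}) is reasonable, and your step \eqref{theo:it:diag}~$\Rightarrow$~\eqref{theo:it:sup} is correct, but two of your links contain genuine gaps. First, your reduction for \eqref{theo:it:sep}~$\Rightarrow$~\eqref{theo:it:diag} rests on the idea that the diagonal property of $L$ is witnessed by the single-block pattern $((\eps,\eps),(A))$. That equivalence is false: take $L=\{a^nb^n \mid n\ge 0\}$ over $A=\{a,b\}$. This $L$ satisfies the diagonal property, yet no word $a^nb^n$ lies in $(\exactcontent{A})^2$ (any factorization $a^nb^n=xy$ with $b\in\content{x}$ forces $\content{y}\subseteq\{b\}$), so $L$ does not contain the pattern $((\eps,\eps),(A))$. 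The diagonal property in general only guarantees a pattern whose blocks $B_1,\ldots,B_p$ jointly cover $A$ in some \emph{unknown} order, and recovering such an order is precisely where the paper invokes Jullien's ideal decomposition (Theorem~\ref{theo:ideals}): $L$ has the diagonal property iff $b_1^*\cdots b_n^*\subseteq\Dclosure{L}$ for \emph{some} ordering of $A$, which is how the paper reduces the diagonal problem to the SUP (Lemma~\ref{lem:sup-to-diagonal}). The paper also does not reduce the full diagonal problem to separability; it only reduces the SUP (whose input is already of the shape $a_1^*\cdots a_n^*$) to separability, via the zigzag characterization (Theorem~\ref{thm:zigzag}) and a parity construction that doubles exponents in $T(\Dclosure{L})$ against a regular language of odd exponents.

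Second, and more importantly, your plan for \eqref{theo:it:sup}~$\Rightarrow$~\eqref{theo:it:sep} misses the key algorithmic idea. You correctly observe that the space of patterns is infinite, but you then propose to compress ``there exists a common pattern'' into a \emph{single} unboundedness instance; you give no construction, and none is apparent, since such an instance would have to existentially quantify over the number of blocks $p$, the alphabets $B_i$, and the unboundedly long connecting words $u_i$ all at once. The paper sidesteps this entirely by running \emph{two semi-procedures in parallel}. The positive one enumerates all \ptl{}s $S$ and checks $I\subseteq S$ and $E\cap S=\emptyset$; both tests reduce to emptiness of an intersection with a regular language, and emptiness is decidable because it reduces to the diagonal problem of an upward closure. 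The negative one enumerates all patterns, and for each \emph{fixed} pattern $(\overrightarrow{u},\overrightarrow{B})$ decides whether a given $L\in\C$ meets $\Lang{u}{B}{n}$ for every $n$: one inserts marker symbols $\$_i$ by upward closure, intersects with a regular language enforcing the shape $u_0(\$_1B_1^{\circledast})^*\$_1u_1\cdots$, and projects onto the markers, yielding a language inside $\$_1^*\cdots\$_k^*$ on which the diagonal question is asked. Theorem~\ref{theo:patterns} guarantees that exactly one of the two semi-procedures terminates, so enumerating the infinite pattern space is harmless --- only semi-decidability is needed on that side. Without this parallel-search structure (or a concrete single-instance encoding, which you have not supplied), your central implication does not go through.
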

Theorem~\ref{theo:dc} provides the equivalence between \ref{theo:it:sup} and
\ref{theo:it:dc}. We prove the remaining equivalences in
Section~\ref{sec:separability}. In particular, we present an algorithm to
decide separability for full trios that have a decidable diagonal problem,
showing one direction of the equivalence.  The algorithm does not rely on
semilinearity of Parikh images.  For example, in Section~\ref{sec:feasibility}
we apply the theorem to Vector Addition System languages, which do not have a
semilinear Parikh image. 


\section{Common Patterns}\label{sec:patterns}

In this section we prove Theorem~\ref{theo:patterns}.  We say that an infinite
sequence is \emph{adequate} if it is $(\overrightarrow{u},
\overrightarrow{B})$-adequate for some factorization pattern $(\overrightarrow{u},
\overrightarrow{B})$. We will show the
following combinatorial statement using Simon's Factorization Forest
Theorem~\citep{Simon-tcs90}. 

\begin{lemma}\label{lem:extract-adequate-subseq}
  Every infinite sequence $(w_n)_n$ of words admits an adequate subsequence.
\end{lemma}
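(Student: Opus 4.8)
The plan is to apply Simon's Factorization Forest Theorem~\citep{Simon-tcs90} to the content morphism $\content{\cdot}\colon A^*\to(2^A,\cup)$, which sends each word to its alphabet. This is a morphism into a finite, commutative, \emph{idempotent} monoid (every element $B$ satisfies $B\cup B=B$), so Simon's theorem yields a constant $N=N(2^{|A|})$ such that every word admits a factorization forest of height at most $N$ for this morphism (briefly, a \emph{content-forest}). In such a forest each internal node is either \emph{binary} (two children, its alphabet being the union of the children's alphabets) or \emph{idempotent} (two or more children, all of the same alphabet $B$, which is then also the node's alphabet). The productive feature we are after is an idempotent node with unboundedly many children: it exhibits the word as a product of many blocks of \emph{equal} alphabet, which is exactly the shape of a pattern block.

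Before the induction I would record one reduction. It suffices to produce a pattern $(\overrightarrow u,\overrightarrow B)$ together with a subsequence $(w_{n_k})_k$ and integers $m_k\to\infty$ with $w_{n_k}\in\Lang u B {m_k}$. Indeed $\Lang u B m\subseteq\Lang u B {m'}$ whenever $1\le m'\le m$ (regroup the $m$ blocks of $\exactcontent{B_i}$ into $m'$ nonempty consecutive groups, each still of alphabet exactly $B_i$), so after thinning the subsequence so that $m_k\ge k$, it becomes genuinely $(\overrightarrow u,\overrightarrow B)$-adequate. This monotonicity is also what later lets me align independently extracted subpatterns to a common multiplicity.

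The core is an induction on the height bound $h$, proving that every infinite sequence of words, each having a content-forest of height at most $h$, admits an adequate subsequence (the Lemma being the case $h=N$). If the lengths $|w_n|$ are bounded, then finitely many words occur, a constant subsequence exists, and it is adequate for the trivial pattern ($p=0$, $\overrightarrow u=(w)$); this handles the base case and the bounded-length case uniformly. Otherwise I pass to a subsequence with $|w_n|\to\infty$ and, by pigeonhole, fix the type of the root of each content-forest and the relevant alphabets across the subsequence. If the roots are idempotent with unboundedly many children, I am done directly: writing $w_n=c^{(n)}_1\cdots c^{(n)}_{k_n}$ with each $\content{c^{(n)}_i}=B$ and $k_n\to\infty$ gives $w_n\in(\exactcontent B)^{k_n}=\Lang u B {k_n}$ for the one-block pattern $\overrightarrow u=(\eps,\eps)$, $\overrightarrow B=(B)$. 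In every remaining case—binary roots, or idempotent roots with a bounded number $k$ of children—the word is a product of a bounded number of factors, each carried by a content-forest of height at most $h-1$. I then split the subsequence according to which factors have unbounded length, absorb the bounded-length factors (which, after thinning, are constant) into the neighbouring connecting words $u_i$, and apply the induction hypothesis to each unbounded factor \emph{sequentially}, refining the index set each time. Concatenating the resulting subpatterns—merging the last $u$-component of one with the first component of the next—produces a single pattern for $w_n$, and aligning their multiplicities via the monotonicity above yields the required adequate subsequence.

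The main obstacle is the bookkeeping of the extractions so that one fixed pattern serves the entire subsequence. Each node type and each choice of alphabets is resolved by passing to an infinite subsequence, and each recursive call on a child returns only a further subsequence together with its own pattern; the delicate points are (i) that the factors of bounded length genuinely become constant on a subsequence (finiteness of the alphabet), so they can be absorbed into the $u_i$, and (ii) that two subpatterns coming from different children, a priori with different block counts, can be forced onto a common multiplicity $m_k\to\infty$, which is precisely where $\Lang u B m\subseteq\Lang u B {m'}$ is used. Since every recursive call strictly decreases the height bound, and the only non-recursive case—idempotent branching that is unbounded—directly outputs a pattern, the induction terminates at $h=N$.
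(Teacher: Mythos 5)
Your proposal is correct and follows essentially the same route as the paper's proof: Simon's Factorization Forest Theorem applied to the content morphism into $2^A$, induction on the (bounded) height of the forests, a dichotomy at the root between bounded arity (recurse on the boundedly many children and concatenate the resulting adequate subsequences) and unbounded idempotent branching (which directly yields the pattern $((\eps,\eps),(B))$ via $(\exactcontent{B})^{K_n}\subseteq(\exactcontent{B})^n$ for $K_n\ge n$). Your explicit treatment of the monotonicity $\Lang{u}{B}{m}\subseteq\Lang{u}{B}{m'}$ for $m'\le m$ and of merging subpatterns just spells out what the paper summarizes as ``a finite concatenation of adequate sequences is obviously adequate.''
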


Before proving it, we note that Lemma~\ref{lem:extract-adequate-subseq} gives us
an alternative proof of Higman's Lemma.

\begin{corollary}[\citeauthor{Higman_1952}, \citeyear{Higman_1952}]
  The scattered subword ordering over $A^{*}$ is well.
\end{corollary}

\begin{proofn}
  We want to show that for any sequence of words $(w_{n})_{n\in\mathbb{N}}$, there exist
  two indices $i,j$ such that $i<j$ and $w_{i}\preceq w_{j}$, where $\preceq$
  denotes the scattered substring relation. Taking a subsequence of
  $(w_{n})_{n\in\mathbb{N}}$ if necessary, we may assume by
  Lemma~\ref{lem:extract-adequate-subseq} that $(w_{n})_{n\in\mathbb{N}}$ is
  adequate. This means that there exists a factorization pattern
  $(\overrightarrow{u},\overrightarrow{B})$, where
  $\overrightarrow{u}=(u_{0},\ldots,u_{p})$ and
  $\overrightarrow{B}=(B_{1},\ldots,B_{p})$ such that 
  $(w_{n})_{n\in\mathbb{N}}$ is
  $(\overrightarrow{u},\overrightarrow{B})$-adequate, that~is,
  \begin{equation}
    \label{eq:h}
    \forall n \in \N,\quad w_n\in\LangExp  u B n p.
  \end{equation}
  In particular for $n=0$, there exist $v_{0},\ldots,v_{p}$ such that
  \begin{equation}
    \label{eq:w_0}
    w_{0}=u_{0}v_{0}u_{1}\ldots u_{p-1}v_{p}u_{p}
  \end{equation}
  and
  \begin{equation}
    \label{eq:vi}
    v_{i}\in \exactcontent{B_{i}}.
  \end{equation}
  Note that by definition of the $\exactcontent{}$ operation, if
  $v\in\exactcontent{B}$, then for all $v'\in(\exactcontent{B})^{|v|}$, we have
  $v\preceq v'$. Together with~\eqref{eq:h}, \eqref{eq:w_0} and \eqref{eq:vi}, this entails that
  $w_{0}\preceq w_{n}$ for all $n\geq n_0=\max\{|v_{i}|\mid 0\leq i\leq p\}$. It
  then suffices to choose $i=0$ and $j=n_{0}$.\qed
\end{proofn}

\begin{proofn}[of Lemma~\ref{lem:extract-adequate-subseq}]
  We use Simon's Factorization Forest Theorem, which we recall. See
  \citep{Simon-tcs90,Kufleitner-mfcs08,bfacto,Colcombet-tcs10,tc-handbook16}
  for proofs and extensions of this theorem. A
  \emph{factorization tree} of a nonempty word $x$ is a finite ordered unranked tree $T(x)$
  whose nodes are labeled by nonempty words, and such that:
  \begin{itemize}
  \item all leaves of $T(x)$ are labeled by symbols,
  \item all internal nodes of $T(x)$ have at least 2 children, and
  \item if a node labeled $y$ has $k$ children labeled $y_1,\ldots, y_k$
    from left to right, then $y=y_1\cdots y_k$.
  \end{itemize}
  Given a semigroup morphism $\varphi:A^+\to S$ into a finite semigroup
  $S$, such a factorization tree is called \emph{$\varphi$-Ramseyan} if every
  internal node has either 2 children, or $k$ children labeled
  $y_1,\ldots, y_k$, in which case $\varphi$ maps all words $y_1,\ldots,y_k$ to the
  same idempotent of~$S$, \emph{i.e.}, to an element $e$ such that $ee=e$. Simon's
  Factorization Forest Theorem states that 
  every word has a $\varphi$-Ramseyan factorization tree of height at most~$3|S|$.

  Let $(w_n)_n$ be an infinite sequence of words. We use Simon's Factorization
  Forest Theorem with the morphism $\contentmorphism
  :A^+\to2^A$. Recall that $\contentmorphism$ maps a word $w$ to the
  set of symbols used in $w$.
  
  Consider a sequence $(T(w_n))_n$, where $T(w_n)$ is an
  $\contentmorphism$-Ramseyan tree of $w_n$, given by the Factorization Forest
  Theorem. In particular, one may choose $T(w_n)$ of height at most
  $3 \cdot 2^{|A|}$. Therefore, extracting a subsequence if necessary, one may
  assume that the sequence of heights of the trees $T(w_n)$ is a constant~$H$. We
  argue by induction on $H$. If $H=0$, then every $w_n$ is a symbol. Hence, one
  may extract from $(w_n)_n$ a constant subsequence, say $(a)_{n\in\N}$, which is
  therefore $((a), ())$-adequate, that is, adequate for the factorization pattern
  consisting of $(a)$ and the empty tuple $()$. Hence, it is adequate, which
  concludes the case $H=0$.

  Assume now that $H>0$. We denote the arity of the root of $T(w_n)$ by $\arity{w_n}$ and we
  call it the arity of $w_n$. We distinguish two cases:
  \begin{description}
  \item [Case 1] One can extract from $(w_n)_n$ a subsequence of bounded arity. Therefore,
    one may extract a subsequence of constant arity, say $K$, from $w_n$, and replacing
    $(w_n)_n$ by such a subsequence, one may assume that $(w_n)_n$ itself has this
    property. This implies that each $w_n$
    can be written as a concatenation of $K$ words
    $$w_n=w_{n,1}\cdots w_{n,K},$$
    where $w_{n,i}$ is the label of the $i$-th child of the root in
    $T(w_n)$.  Therefore, the $\contentmorphism$-Ramseyan subtree of
    each $w_{n,i}$ is of height at most $H-1$. By induction, one can
    extract from every $(w_{n,i})_n$ an adequate
    subsequence. Proceeding iteratively for $i=1,2,\ldots, K$, one
    extracts from $(w_n)_n$ a subsequence $(w_{\sigma(n)})_n$ such
    that every $(w_{\sigma(n),i})_n$ is adequate.  But a finite
    concatenation of adequate sequences is obviously adequate.
    Therefore, the subsequence $(w_{\sigma(n)})_n$ of $(w_n)_n$ is
    also adequate.
    
  \item [Case 2] The arity of $w_n$ grows to infinity. Therefore, extracting if
    necessary, one can assume for every $n$, that $\arity{w_n}\geq \max(n,3)$.
    Since each factorization tree is $\contentmorphism$-Ramseyan and since the
    arity of the root of each tree is at least 3, all
    children of the root of the $n$-th tree map to the same idempotent in $2^A$, say
    $B_n\subseteq A$. Since $2^A$ is finite, one can further extract a subsequence, say
    $w_{\sigma(n)}$, such that $B_{\sigma(n)}$ is constant, equal to some
    $B\subseteq A$. To sum up, each word of the subsequence is of the form
    $$w_{\sigma(n)}=w_{n,1}\cdots w_{n,K_n},$$
    with $K_n\geq n$ and, where the alphabet of $w_{n,i}$ is
     $B$. Therefore, $w_{\sigma(n)}\in
    (\exactcontent B)^{K_n}\subseteq(\exactcontent B)^n$, which means that
    $(w_{\sigma(n)})_n$ is $((\varepsilon,
    \varepsilon), (B))$-adequate, hence it is adequate.\qedhere
  \end{description}
\end{proofn}


For a nonempty word $w$, denote its first (resp., last) symbol by
$\text{first}(w)$, (resp., $\text{last}(w)$).
A factorization pattern $(\overrightarrow{u},\overrightarrow{B})= ((u_0,\ldots,u_p),(B_1,\ldots,B_p))$ is said to be \emph{proper} if
\begin{enumerate}
\item\label{item:2} 
 for all $i = 0,\ldots,p-1$, we have $u_i=\varepsilon$ or $\text{last}(u_i) \notin B_{i+1}$, 
 \item 
 for all $i = 1,\ldots,p$, we have $u_i=\varepsilon$ or $\text{first}(u_i) \notin B_{i}$,  and
\item\label{item:3}  
  for all $i=1,\ldots,p-1$, if $u_i = \varepsilon$, then we have $\big(B_{i} \nsubseteq B_{i+1} \text{ and } B_{i+1} \nsubseteq B_{i}\big)$.
\end{enumerate}

Note that if a sequence $(w_n)_n$ is adequate, then there exists a \emph{proper} factorization pattern $(\overrightarrow{u}, \overrightarrow{B})$ such that $(w_n)_n$ is $(\overrightarrow{u}, \overrightarrow{B})$-adequate. This is easily seen from the following observations and their symmetric counterparts:
\begin{equation*}
  \begin{array}{rcl}
    u = a_1 \cdots a_k \text{ and } a_k \in B &\ \Rightarrow\ & a_1 \cdots a_k (\exactcontent{B})^n \subseteq a_1 \cdots a_{k-1} (\exactcontent{B})^n, \\
    B_{i} \subseteq B_{i+1} &\ \Rightarrow\ & (\exactcontent{B_{i}})^n (\exactcontent{B_{i+1}})^n \subseteq (\exactcontent{B_{i+1}})^n.
  \end{array}
\end{equation*}

The following lemma gives a condition under which two sequences share
a factorization pattern and is very similar to \cite[Theorem~8.2.6]{JAbook}. In its statement, we write $v
\sim_n w$ for
two words $v$ and $w$ if they have the same \subwords up to length
$n$, that is, if for every word $u$ of length at most $n$, we have $u \preceq v$
if and only if $u \preceq w$. 
%
Notice that $\sim_n$ is an
equivalence relation for every $n \in \N$.

\begin{lemma}
  \label{lem:samepatt}
  Let $(\overrightarrow{u},\overrightarrow{B})$ and $(\overrightarrow{t}, \overrightarrow{C})$ be proper
  factorization patterns.
  Let $(v_n)_{n}$ and  $(w_n)_{n}$  be two sequences of words such that
  \begin{itemize}
  \item $(v_n)_{n}$ is $(\overrightarrow{u},\overrightarrow{B})$-adequate
  \item $(w_n)_{n}$ is $(\overrightarrow{t},
    \overrightarrow{C})$-adequate, and
 \item $v_n \sim_n w_n$  for every $n\geq0$.
 \end{itemize}
 Then, $\overrightarrow{u}=\overrightarrow{t}$ and $\overrightarrow{B} = \overrightarrow{C}$.
\end{lemma}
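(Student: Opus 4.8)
The plan is to replace the three hypotheses by a single language equation and then to argue that a proper pattern is uniquely determined by that language. For a proper pattern $(\overrightarrow{u},\overrightarrow{B})$ with $\overrightarrow{u}=(u_0,\ldots,u_p)$ and $\overrightarrow{B}=(B_1,\ldots,B_p)$ I set
\[
  L(\overrightarrow{u},\overrightarrow{B})=\Dclosure{u_0}\,B_1^*\,\Dclosure{u_1}\,B_2^*\cdots B_p^*\,\Dclosure{u_p}.
\]
The first and central step is the downward-closure identity: for every $n\geq 0$ and every word $v_n\in\Lang u B n$,
\[
  \Dclosure{v_n}\cap A^{\leq n}=L(\overrightarrow{u},\overrightarrow{B})\cap A^{\leq n},
\]
where $A^{\leq n}$ denotes the words of length at most $n$. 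The inclusion ``$\subseteq$'' follows by cutting any embedding of a \subword $s\preceq v_n$ along the factorization $v_n=u_0x_1u_1\cdots x_pu_p$ with $x_k\in(\exactcontent{B_k})^n$: this writes $s=s_0r_1s_1\cdots r_ps_p$ with $s_i\preceq u_i$ and $r_k\preceq x_k$, so $s_i\in\Dclosure{u_i}$ and $r_k\in B_k^*$. For ``$\supseteq$'', given such an $s$ of length at most $n$, each factor $r_k\in B_k^*$ has length at most $n$ and can be spread across the $n$ blocks of $x_k$, placing one symbol per block (possible since every block has alphabet exactly $B_k$), while $s_i\preceq u_i$; this exhibits $s\preceq v_n$. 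Note the identity holds for the concrete words $v_n$, independently of the chosen representatives.

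Since, by definition, $v_n\sim_n w_n$ is the equality $\Dclosure{v_n}\cap A^{\leq n}=\Dclosure{w_n}\cap A^{\leq n}$, the identity turns the hypothesis (which holds for all $n$) into $L(\overrightarrow{u},\overrightarrow{B})\cap A^{\leq n}=L(\overrightarrow{t},\overrightarrow{C})\cap A^{\leq n}$ for every $n$, that is, into $L(\overrightarrow{u},\overrightarrow{B})=L(\overrightarrow{t},\overrightarrow{C})$. Thus the lemma reduces to the claim that the map $(\overrightarrow{u},\overrightarrow{B})\mapsto L(\overrightarrow{u},\overrightarrow{B})$ is injective on proper patterns.

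For this injectivity I would use the structure of downward closed languages. Each factor $\Dclosure{u_i}=(a^i_1+\varepsilon)\cdots(a^i_{r_i}+\varepsilon)$ (for $u_i=a^i_1\cdots a^i_{r_i}$) and each factor $B_k^*$ is an atomic ideal, so $L(\overrightarrow{u},\overrightarrow{B})$ is a single ideal presented as a product of atoms of the two shapes $(a+\varepsilon)$ and $B^*$. The key observation is that the two properness conditions are exactly the conditions making this product \emph{reduced}: condition~\ref{item:2} ($\text{last}(u_i)\notin B_{i+1}$ and $\text{first}(u_i)\notin B_i$) forbids the absorptions $(a+\varepsilon)B^*=B^*$ and $B^*(a+\varepsilon)=B^*$ for $a\in B$ at every symbol-next-to-star adjacency, while condition~\ref{item:3} forbids the absorptions $B^*C^*=C^*$ (for $B\subseteq C$) and $B^*C^*=B^*$ (for $C\subseteq B$) at every star-next-to-star adjacency; all remaining adjacencies are symbol-next-to-symbol and never reduce. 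Hence $L(\overrightarrow{u},\overrightarrow{B})$ and $L(\overrightarrow{t},\overrightarrow{C})$ are reduced product decompositions of the same ideal. Invoking uniqueness of the reduced decomposition (the statement underlying \cite[Theorem~8.2.6]{JAbook}), I read off the atoms in order: this recovers the star alphabets $B_1,\ldots,B_p$ together with their positions and the intervening maximal blocks of symbol atoms, which spell out $u_0,\ldots,u_p$. Matching the two decompositions atom by atom yields $\overrightarrow{u}=\overrightarrow{t}$ and $\overrightarrow{B}=\overrightarrow{C}$.

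The main obstacle is precisely this uniqueness of the reduced decomposition, which I would establish by induction on the number of atoms, peeling off the leftmost atom. The delicate case is identifying the leftmost atom from the language alone---distinguishing a leading star $B_1^*$ from a leading symbol $(a+\varepsilon)$ and recovering $B_1$---because a single symbol may occur in several atoms, so merely counting occurrences does not localize it. This is exactly where reducedness (properness) is essential: it guarantees that the left stabilizer $S=\{c\in A\mid c^{-1}L=L\}$ is nonempty and equals $B_1$ when the leftmost atom is a star, and is empty otherwise. Once the leftmost atom is pinned down, the corresponding residual of $L$ is again a reduced product with one fewer atom, and the induction applies; verifying the behaviour of $S$ under the front-embedding of a prepended symbol is the technical heart of the argument.
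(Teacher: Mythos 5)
Your reduction of the three hypotheses to a single equality of ideals is correct, and it is a genuinely different route from the paper's. The downward-closure identity $\Dclosure{v_n}\cap A^{\leq n}=L(\overrightarrow{u},\overrightarrow{B})\cap A^{\leq n}$ holds exactly as you argue (the ``one symbol per block'' direction is where adequacy is used, and it works), so the hypothesis $v_n\sim_n w_n$ for all $n$ does collapse to $L(\overrightarrow{u},\overrightarrow{B})=L(\overrightarrow{t},\overrightarrow{C})$; and your observation that the two properness conditions say precisely that the product of atoms $(a+\varepsilon)$ and $B^*$ is locally reduced is also right. The lemma thus becomes: a proper pattern is determined by its ideal.

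That final statement --- uniqueness of reduced atom decompositions --- is, however, where the entire difficulty of the lemma lives, and your proposal does not establish it: you invoke it, sketch an induction that peels off the leftmost atom via the left stabilizer $S=\{c\in A\mid c^{-1}L=L\}$, and yourself flag the verification as the ``technical heart.'' Two concrete points remain open there: (i) showing $S=\emptyset$ when the leading atom is a symbol atom requires, for each $c$, a witness $w\in L$ with $cw\notin L$, which is delicate precisely when $c$ reappears in later star atoms; (ii) after identifying the leading atom you must define the residual ideal and prove it is again the reduced product of the remaining atoms, so the induction hypothesis applies. Until these are done the proof is incomplete. For comparison, the paper sidesteps the canonical-form theorem: it pumps $v_0$ to $v_0^{(k)}$, embeds it leftmost into $w_M$ using $v_M\sim_M w_M$, reads off order-preserving maps $f,g$ between the two sequences of alphabets, and applies Lemma~\ref{lemma-825} to conclude they are mutually inverse, properness entering exactly in verifying condition~(3) of that lemma. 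Your route could be completed either by carrying out the induction in full or by properly citing the uniqueness of canonical ideal decompositions (Jullien's canonical form; essentially \cite[Theorem~8.2.6]{JAbook}, which the paper names as the result this lemma parallels); as written it is a correct and elegant reduction resting on an unproved key step.
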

\begin{proof}
  For a factorization pattern $(\overrightarrow{u},\overrightarrow{B})$, we define
  \[
    \| (\overrightarrow{u},\overrightarrow{B}) \| = (\sum_{i=0}^{\ell} |u_i| )+\ell,
  \] 
  where $\ell+1$ is the number of components in the vector $\overrightarrow{u}=(u_0,\ldots,u_\ell)$. Let 
  \[
    k = \max(\| (\overrightarrow{u},\overrightarrow{B}) \|, \| (\overrightarrow{t},\overrightarrow{C}) \|) +1.
  \] 
  Consider the second word of the sequence $(v_n)_{n}$, \emph{i.e.},~$v_1 = u_0
  r_1 u_1 \cdots r_{\ell} u_\ell$, where $\content{r_i} = B_i$. Define 
  \begin{equation}
    v_1^{(k)} = u^{}_0 r_1^k u^{}_1 \ldots r_{\ell}^k u^{}_\ell.\label{eq:v0}
  \end{equation}
  Recall that $(v_n)_{n}$ being a $(\overrightarrow{u},\overrightarrow{B})$-adequate sequence
  means that
  \begin{equation}
    \forall n\in\N,\quad v_n \in \LangExp  u B n \ell.\label{eq:vn}
  \end{equation}
  Let $N = k \cdot \max(|r_1|, \ldots, |r_n|)$. By \eqref{eq:v0} and \eqref{eq:vn},
  we get $v_1^{(k)} \preceq v_N$. Let $M \geq \max(N, | v_1^{(k)}|)$, so that
  $v_1^{(k)}$ is a \subword of $v_M$ of length at most $M$. Since $v_M \sim_M
  w_M$, this gives that
  \begin{equation}
    \label{eq:embed}
    v_1^{(k)} \preceq w_M.
  \end{equation}
  To show that $(\overrightarrow{u},\overrightarrow{B}) =
  (\overrightarrow{t}, \overrightarrow{C})$, we first  define a bijection
  between elements of the sequence of indexed alphabets in
  $\overrightarrow{B}$ and elements of the sequence of those in
  $\overrightarrow{C}$. For this, we embed
  $v_1^{(k)}$ in $w_{M}$ in the `leftmost' way. Let us make
  this precise: let $x,y\in A^*$ with $x=x_1\cdots x_{|x|}$ and
  $y=y_1\cdots y_{|y|}$, where $x_i,y_i\in A$. If $x\preceq y$, by definition there exists an
  increasing mapping $h:\{1,\ldots,|x|\}\to \{1,\ldots,|y|\}$ such that
  $x_i=y_{h(i)}$ holds for all $i$. We say that such a mapping is an \emph{embedding}
  of $x$ in $y$. The \emph{leftmost embedding} $h_\text{left}$ of $x$ in $y$
  is such that additionally, for any embedding $h$, we have $h_\text{left}(i)\leq
  h(i)$ for all~$i$.

  \smallskip To define the bijection, it is convenient to tag the subalphabets
  $B_{i},C_{j}$, as some of them may be equal. Therefore, we set
  $\mathbf{B} = \{ (B_1, 1),\ldots, (B_{\ell}, \ell)\}$ and
  $\mathbf{C} = \{ (C_1,1), \ldots, (C_{m},m)\}$, where $\ell$ (resp.\ $m$)
  denotes the length of the vector $\overrightarrow{B}$ (resp.\
  $\overrightarrow{C}$). We define a function
  $f : \mathbf{B} \rightarrow \mathbf{C}$ as follows. Consider the leftmost
  embedding $h_\text{left}$ of $v_1^{(k)}$ in $w_{M}$, which exists
  by~\eqref{eq:embed}. Since $(w_n)_n$ is a
  $(\overrightarrow{t},\overrightarrow{C})$-adequate sequence, one may write
  $$w_M=t_0s_1t_1\cdots t_{m-1}s_mt_m\quad \text{with } s_j\in(\exactcontent{C_j})^M
  \text{ for all } j.$$
  For $i$ in $\{1,\ldots,\ell\}$, consider the \infix $r_i^{k}$ of $
  v_1^{(k)}$. 
By definition of $k > \| (\overrightarrow{t},\overrightarrow{C})
  \|$ and since $|r_i| >0$, the
  pigeonhole principle implies that all positions of one of these $k$ copies of
  $r_i$ must be mapped
  by $h_\text{left}$ to positions of some \infix $s_j\in(\exactcontent{C_j})^M$.
  We define $f(B_i,i)=(C_j,j)$ for the smallest such index $j$.

  The function $g : \mathbf{C} \rightarrow \mathbf{B}$ is defined
  symmetrically, by exchanging the roles of  $(v_n)_n$ and $(w_n)_n$.  
  Moreover, since $\content{r_i} = B_i$ and $r_i\preceq
  s_j\in(\exactcontent{C_j})^M$, we obtain that $f(B_i,i) = (C_j,j)$ entails $B_i \subseteq
  C_j$. Similarly, if $g(C_j, j) = (B_i,i)$, then $C_j \subseteq B_i$. 
  If we show that $f$ and $g$ define a bijective correspondence between $\mathbf{B}$ and $\mathbf{C}$, then $\ell=m$. The 
  above observations would then imply that $B_i = C_i$, for every $i$.

  To establish that $f$ and $g$ are inverses of one another, we apply Lemma 8.2.5 from \citep{JAbook}, which we shall first repeat:  
  \begin{lemma}[{\citeauthor{JAbook}, \citeyear[Lemma~8.2.5]{JAbook}}]
    \label{lemma-825}
    Let $X$ and $Y$ be finite totally ordered sets and let $P$ be a partially ordered
    set. Let $f : X \rightarrow Y, g : Y \rightarrow X, p : X
    \rightarrow P$ and $q : Y \rightarrow P$ be functions such that
    \begin{enumerate}
    \item\label{it:0} $f$ and $g$ are nondecreasing.
    \item\label{it:1} for any $x \in X, p(x) \leq q(f(x))$,
    \item\label{it:2} for any $y \in Y, q(y) \leq p(g(y))$,
    \item\label{it:3}if $x_1, x_2 \in X, f(x_1) = f (x_2)$ and $p(x_1)
      = q ( f(x_1))$, then $x_1 = x_2$,
    \item\label{it:4}if $y_1, y_2 \in Y, g(y_1) = g (y_2)$ and $q(y_1)
      = p ( g(y_1))$, then $y_1 = y_2$.
    \end{enumerate}
    Then $f$ and $g$ are mutually inverse functions and $p = q \circ f$ and $q = p \circ g$.     
  \end{lemma}
  We prove Lemma~\ref{lemma-825} for the sake of completeness, and because there
  is a minor mistake in \citep{JAbook}: the original statement lacks the hypothesis that
  $f,g$ are nondecreasing, and this hypothesis can easily be seen to be
  necessary.

  \begin{proofn}[of Lemma~\ref{lemma-825}]
    By symmetry, it suffices to prove that for all $x\in X$, we have
    $g(f(x))=x$. Note that this equality together with Item~\ref{it:1} and
    Item~\ref{it:2} applied to $y=f(x)$ then entails that $p=q\circ f$.

    Let $x\in X$, we have to show that $g(f(x))=x$. We define inductively two
    sequences, by $x_{0}=x$, $y_{n}=f(x_{n})\in Y$ and $x_{n+1}=g(y_{n})\in
    X$. Assume by contradiction that $x_{0}\not=x_{1}$. By symmetry, we may assume
    that $x_{0}<x_{1}$. We claim that the sequence $(x_{n})_{n}$ is strictly
    increasing, which immediately gives a contradiction since $X$ is finite.
    We prove this claim by induction. We already know that  $x_{0}<x_{1}$.

    Assume now that $x_{n-1}<x_{n}$, we show that $x_{n}<x_{n+1}$. Since $f$ is
    nondecreasing, we have $y_{n-1}\leq y_{n}$. We want to show that $y_{n-1}<
    y_{n}$. If on the contrary $y_{n-1}= y_{n}$, then
    $f(x_{n-1})= f(x_{n})$ and by Items~\ref{it:1} and \ref{it:2}, we would have,
    \[
      p(x_{n})\leq q(f(x_{n}))=q(y_{n})=q(y_{n-1})\leq p(g(y_{n-1}))=p(x_{n}),
    \]
    so that $p(x_{n})=q(f(x_{n}))$.  Item~\ref{it:3} applied to $x_{n}$ and
    $x_{n-1}$ would imply that $x_{n-1}=x_{n}$, contradicting our induction
    hypothesis. Therefore, $y_{n-1}<y_{n}$. With a similar argument, from  $y_{n-1}<y_{n}$ we
    obtain $x_{n}<x_{n+1}$, as desired.  \qed
  \end{proofn}
  
  To apply Lemma~\ref{lemma-825}, let $X = \mathbf{B}$ be ordered according to the
  second component, that is, $(B_{i},i)<(B_{j},j)$ when $i<j$.  Similarly, let
  $Y = \mathbf{C}$ be ordered according to the second component. Observe that the
  functions $f:\mathbf{B}\to\mathbf{C}$ and $g:\mathbf{C}\to\mathbf{B}$ defined
  above are nondecreasing by construction.  Finally, let $P$ be the set of
  subalphabets of $A$ partially ordered by inclusion, and let $p$ and $q$ be the
  projections onto the first coordinate.

  Let us verify that $f$ and $g$ fulfill all conditions of Lemma~\ref{lemma-825}.
  Item~\ref{it:1} holds since for all $i$ and $j$, $f(B_i,i) = (C_j,j)$
  implies that $B_i \subseteq C_j$.
  Item~\ref{it:2} holds symmetrically.

  For Item~\ref{it:3}, suppose that $f(B_{i_1},i_1) = f(B_{i_2},i_2)=(C_j,j)$ and
  that $p(B_{i_1},i_1)=q(f(B_{i_1},i_1))$, that is, $B_{i_1} =C_j$. The first
  condition, $f(B_{i_1},i_1) = f(B_{i_2},i_2)=(C_j,j)$, means that a \infix
  $r_{i_1}$ and a \infix $r_{i_2}$ of $v_1^{(k)}$ have all their positions
  mapped by $h_\text{left}$ to positions of $s_j$ in $w_{M}$. Therefore, all
  intermediate positions are also mapped to positions of $s_j$, which implies,
  using the second condition, $B_{i_1} =C_j$, that $\content{r_{i_1} u_{i_1} \cdots
    {r_{i_2}}} \subseteq \content{s_j} = C_j = B_{i_1} = \content{r_{i_1}}$. But we
  assumed that $(\overrightarrow{u},\overrightarrow{B})$ is a \emph{proper}
  factorization pattern, so $i_1$ must be equal to $i_2$. This shows that
  Item~\ref{it:3} holds. Item~\ref{it:4} is dual.

  It follows that indeed $f$ and $g$ define a bijective correspondence
  between $\mathbf{B}$ and $\mathbf{C}$, thus  $\ell=m$ and $B_i = C_i$,
  for every $i$. Since we are dealing with proper factorization
  patterns, $v_1^{(k)} \preceq w_{M}$ now implies that $u_i \preceq t_i$ for
  every $i$. Symmetrically, $t_i \preceq u_i$ for every $i$. Thus, for every $i$, $u_i = t_i$.
\end{proof}

\noindent
The proof of Theorem~\ref{theo:patterns} relies on Lemmas~\ref{lem:extract-adequate-subseq} and \ref{lem:samepatt}, as well as on the following characterization of \ptl-(non)-separation.

\begin{lemma}\label{lem:sepcarac}
  Let $I$ and $E$ be two languages. Then, $I$ and $E$ are \emph{not} \ptl-separable if and only
  if for every $n \in \mathbb{N}$, there exist $v_n \in I$ and $w_n \in E$ such that $v_n \sim_n w_n$.
\end{lemma}

\begin{proof}
  We first prove the implication from left to right by contraposition: suppose that there
  exists $n \in \mathbb{N}$ such that $(I\times E)\cap{\sim_{n}}=\emptyset$. We have to prove that
  $I$ and $E$ are \ptl-separable. The assumption implies that
  $[I]_n := \{v \mid \exists w \in I$ such that $v\sim_n w\}$ separates $I$ from $E$. It remains to verify
  that it is a \ptl.  Observe that $\sim_n$ has finite index, because an
  $\sim_n$-class is defined by a subset of words of length at most $n$.  Therefore, $[I]_n$ is a
  finite union of $\sim_n$-classes. To show that $[I]_n$ is a \ptl, it is therefore enough to check
  that a single $\sim_n$-class is a \ptl, which is the case since the $\sim_n$-class of $u$ is
  \[
    \bigcap_{v\preceq u,\ |v|\leq n}v{\uparrow} \cap \bigcap_{v\not\preceq u,\ |v|\leq
      n}A^*\setminus (v{\uparrow})
  \]
  where $v{\uparrow}$ is the $A$-upward closure of $\{v\}$, \emph{i.e.}, the piece language
  $A^*a_1A^*\cdots A^*a_kA^*$, if $v=a_1\cdots a_k$.

  For the other direction, we first note that a piecewise testable language is a union of $\sim_n$-equivalence classes for some $n$.
  Indeed, notice that by definition, a piecewise testable language is a finite Boolean combination of piece languages. Let $n$ be the maximal length of the pieces defining these piece languages.
  Any piece language $A^*a_1A^*\cdots A^*a_kA^*$, may, for $v=a_1\cdots a_k$, be written as $\bigcup_{v \preceq w} [w]_{\sim_k}$. This is a finite union, as $\sim_k$ has finite index. For every $k$, the equivalence relation $\sim_{k+1}$ is a refinement of $\sim_k$. Therefore, $A^*a_1A^*\cdots A^*a_kA^*$ may also be written as a finite union of $\sim_n$-equivalence classes. And, clearly, a finite Boolean combination of finite unions of $\sim_n$-equivalence classes is again a finite union of $\sim_n$-equivalence classes.
  Now suppose that for every $n \in \mathbb{N}$, there are words in $I$ (respectively, $E$) that are $\sim_n$-equivalent. Since a piecewise testable language is a union of $\sim_n$-equivalence classes for some $n$, any piecewise testable language containing $I$ will, in this case, have nonempty intersection with $E$.
\end{proof}

We are now able to prove Theorem~\ref{theo:patterns} using Lemmas~\ref{lem:extract-adequate-subseq}, \ref{lem:samepatt}
and~\ref{lem:sepcarac}.

\smallskip
\noindent \textbf{Proof of Theorem~\ref{theo:patterns}.}
{\it
Two word languages $I$ and $E$ are not separable by \ptl if and only
  if they contain a common pattern $(\overrightarrow{u},
  \overrightarrow{B})$.}

\begin{proof}
  To prove the ``if''-direction of the theorem, we show that for every $n$, there
  exist words in $I$ and in~$E$ that are $\sim_n$-equivalent.  By hypothesis, $I$
  and $E$ contain a common factorization pattern
  $(\overrightarrow{u},\overrightarrow{B})$. By definition, this gives us two sequences
  of words $(v_n)_n$ and $(w_n)_n$ such that for all $n$,
  \begin{align*}
    & v_n\in I\cap \LangExp  u B n p,\\
    & w_n\in E\cap \LangExp  u B n p.
  \end{align*}
  Observe that for all $i$, $(\exactcontent{B_i})^n$ contains precisely all
  words from $B_i^{\leq n}$ as \subwords of size up to~$n$. It follows that $v_n \sim_n w_n$.

  We now prove the ``only-if'' direction. 
  The existence of $v_n \sim_n w_n$ for every $n \in \mathbb{N}$ defines an infinite sequence of pairs $(v_n,w_n)_n$, from which
  we will iteratively extract infinite subsequences to obtain
  additional properties, while
  keeping~$\sim_n$-equivalence. 

  By Lemma~\ref{lem:extract-adequate-subseq}, one can extract from $(v_n,w_n)_n$ a subsequence whose first component forms an adequate sequence. From this subsequence of pairs, using Lemma~\ref{lem:extract-adequate-subseq} again, we extract a subsequence whose second component is also adequate (note that the first component remains adequate). Therefore, one can assume that both $(v_n)_n$ and $(w_n)_n$ are themselves adequate. This means there exist proper factorization patterns for which $(v_n)_n$ resp.~$(w_n)_n$ are adequate.
  %
%
%
  Since $v_n\sim_n w_n$, Lemma~\ref{lem:samepatt} shows that one can choose the \emph{same}
  proper factorization pattern $(\overrightarrow{u},\overrightarrow{B})$ such that both
  $(v_n)_n$ and $(w_n)_n$ are $(\overrightarrow{u},\overrightarrow{B})$-adequate. This means
  that $I$ and $E$ contain a common pattern
  $(\overrightarrow{u},\overrightarrow{B})$. 
 \end{proof}

\newcommand{\impl}[2]{``$\ref{#1}\Rightarrow\ref{#2}$''}

\section{The Characterization for Separability}\label{sec:separability}

In this section we prove our main characterization:

\medskip \noindent \textbf{Theorem~\ref{theo:decidability}.}
{\it
 For each full trio $\C$, the following are equivalent:
  \begin{enumerate}[label=(\arabic*), ref=(\arabic*)]
  \item Separability of $\C$ by \ptl is decidable.
  \item The diagonal problem for $\C$ is decidable.
  \item The SUP for $\C$ is decidable.
  \item Downward closures are computable for $\C$.
  \end{enumerate}}
The equivalence between \ref{theo:it:sup} and \ref{theo:it:dc} is
immediate from Theorem~\ref{theo:dc}. The implication
\impl{theo:it:diag}{theo:it:sup} is trivial because the SUP is a
special case of the diagonal problem. We prove the other direction
\impl{theo:it:sup}{theo:it:diag} and the implication \impl
{theo:it:sep}{theo:it:sup} in Section~\ref{sec:diagonal-algo}.
Finally, we prove the implication \impl{theo:it:diag}{theo:it:sep} in
Section~\ref{sec:separability-algo}, by giving an algorithm for
separability if the diagonal problem is decidable.

\subsection{Algorithms for the Diagonal Problem and the SUP}\label{sec:diagonal-algo}

In this section, we reduce the diagonal problem to the simultaneous
unboundedness problem and to separability by \ptl.  It should be noted
that the implication \impl{theo:it:sup}{theo:it:diag} already follows
easily from the equivalence between \ref{theo:it:sup} and
\ref{theo:it:dc} (shown in \citep{Zetzsche-icalp15}): when one can
compute downward closures, the diagonal problem reduces to the case of
regular languages. However, the algorithm for downward closure
computation in \citep{Zetzsche-icalp15} is not accompanied by any
complexity bounds.  Therefore, we mention here a reduction that can be
carried out in non-deterministic polynomial time if we assume that the
full trio operations require only polynomial time
(Lemma~\ref{lem:sup-to-diagonal}).  This reduction is based on the
fact that downward closed languages can be written as finite unions of
ideals, which is also a central ingredient in
\citep{Zetzsche-icalp15}. An \emph{ideal} is a language of the form
$B_0^*\{b_1,\varepsilon\}B_1^*\cdots \{b_m,\varepsilon\}B_m^*$, where
$m\ge 0$, $b_1,\ldots,b_m$ are symbols, and $B_0,\ldots,B_m$ are
alphabets. Clearly, every ideal is downward closed. We use the
following result by Jullien, which has later been rediscovered
in \citep{Abdulla2004}.

\begin{theorem}[\citeauthor{Jullien1969}, \citeyear{Jullien1969}]\label{theo:ideals}
  Every downward closed language is a finite union of ideals.
\end{theorem}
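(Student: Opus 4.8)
The plan is to prove the theorem in two stages: a general order-theoretic reduction to the \emph{directed} case, followed by a concrete identification of the directed downward closed languages as ideals, for which I will reuse Lemma~\ref{lem:extract-adequate-subseq}.

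First I would record the only consequence of Higman's Lemma that is needed. Since $\preceq$ is a well quasi order on $A^*$, the upward closed languages satisfy the ascending chain condition (an ascending union of upward closed sets is upward closed, hence generated by finitely many minimal words, each of which already appears in some member, so the union is already reached); by complementation the downward closed languages satisfy the \emph{descending} chain condition. This makes strict inclusion $\subsetneq$ well founded on downward closed languages, which licenses well-founded induction. Call a downward closed language \emph{bad} if it is not a finite union of ideals, and suppose a bad language exists; pick a $\subseteq$-minimal bad one, $D$. Then $D\neq\emptyset$ (the empty language is the empty union), and $D$ is not directed, since stage two will show \emph{unconditionally} that a directed downward closed language is an ideal. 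So there are $x,y\in D$ with no common upper bound inside $D$. Setting $D_1=\{z\in D\mid x\not\preceq z\}$ and $D_2=\{z\in D\mid y\not\preceq z\}$ yields two downward closed languages (if $x\preceq z'\preceq z$ then $x\preceq z$), with $x\notin D_1$ and $y\notin D_2$, so both are strictly smaller than $D$; moreover $D=D_1\cup D_2$, since no $z\in D$ dominates both $x$ and $y$. Minimality makes $D_1,D_2$ finite unions of ideals, hence so is $D$, a contradiction.

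It remains to show that a nonempty directed downward closed $D\subseteq A^*$ is an ideal. If $D$ is finite, directedness yields a maximum word $w$, and $D=\Dclosure{\{w\}}$ is the product of singletons $\{a_1,\varepsilon\}\cdots\{a_{|w|},\varepsilon\}$, an ideal. If $D$ is infinite, enumerate $D=\{d_1,d_2,\dots\}$ and use directedness to build an increasing chain $w_1\preceq w_2\preceq\cdots$ in $D$ with $d_n\preceq w_n$, so that $D=\bigcup_n\Dclosure{\{w_n\}}$. Applying Lemma~\ref{lem:extract-adequate-subseq} to $(w_n)_n$ extracts a still cofinal adequate subsequence, say $(\overrightarrow u,\overrightarrow B)$-adequate with $\overrightarrow u=(u_0,\dots,u_p)$ and $\overrightarrow B=(B_1,\dots,B_p)$; replacing $(w_n)_n$ by it keeps $D=\bigcup_n\Dclosure{\{w_n\}}$ while guaranteeing $w_n\in\Lang u B n$ for every $n$. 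I then claim
\[
  D=\Dclosure{u_0}\,B_1^*\,\Dclosure{u_1}\cdots B_p^*\,\Dclosure{u_p}.
\]
The inclusion $\subseteq$ holds because each $w_n\in\Lang u B n$ lies in the right-hand product (as $(\exactcontent{B_k})^n\subseteq B_k^*$ and $u_i\in\Dclosure{u_i}$) and the product is downward closed. For $\supseteq$, a word $v=x_0y_1x_1\cdots y_px_p$ of the product, with $x_i\preceq u_i$ and $y_k\in B_k^*$, embeds into $w_n$ for any $n\geq\max_k|y_k|$: each $y_k$ is a \subword of every element of $(\exactcontent{B_k})^n$ (embed its at most $n$ letters into distinct blocks, each of which contains all of $B_k$), so $v\preceq w_n\in D$. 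Finally, each $\Dclosure{u_i}$ is itself a product of singletons $\{a,\varepsilon\}$, so after inserting empty factors $\emptyset^*=\{\varepsilon\}$ and interleaving with the $B_k^*$, the right-hand side is exactly of the ideal form $B_0^*\{b_1,\varepsilon\}B_1^*\cdots\{b_m,\varepsilon\}B_m^*$.

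The genuinely substantive step is this last one: extracting an increasing cofinal chain from a directed set and then pinning down its downward closure as a single product. The main obstacle — controlling the unbounded ``blocks'' of the limit so that they collapse to full subalphabets $B_k^*$ — is precisely what Lemma~\ref{lem:extract-adequate-subseq} already supplies via Simon's Factorization Forest Theorem, so the remaining work reduces to the two routine inclusions above. The order-theoretic first stage is then elementary once the descending chain condition is in hand.
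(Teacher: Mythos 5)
Your argument is correct; note, though, that the paper itself does not prove this theorem but imports it from Jullien (1969) and its rediscovery by Abdulla et al.\ (2004), so there is no in-paper proof to compare against. Your stage one is the standard abstract WQO argument: downward closed sets inherit the descending chain condition from the ascending chain condition on upward closed sets, and a minimal non-decomposable $D$ cannot be directed since otherwise stage two applies, so it splits as $D=(D\setminus \mathord{\uparrow}x)\cup(D\setminus\mathord{\uparrow}y)$ for a pair without a common upper bound, contradicting minimality. The genuinely language-specific content is your stage two, and your idea of running Lemma~\ref{lem:extract-adequate-subseq} on a cofinal increasing chain inside a directed downward closed set is a nice self-contained shortcut: the usual identifications of the ideals of $(A^*,\preceq)$ go by induction on the alphabet or by closure properties of simple regular expressions, whereas you read off the product shape $\Dclosure{u_0}\,B_1^*\,\Dclosure{u_1}\cdots B_p^*\,\Dclosure{u_p}$ directly from adequacy. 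Both inclusions check out: the right-hand side is downward closed and contains every $\Lang{u}{B}{n}$, and conversely each $y_k\in B_k^*$ with $|y_k|\le n$ embeds letter-by-letter into distinct blocks of $s_k\in(\exactcontent{B_k})^n$, using that the $B_k$ in a factorization pattern are nonempty. The only cosmetic wrinkle is that converting the product to the syntactic ideal format $B_0^*\{b_1,\varepsilon\}B_1^*\cdots\{b_m,\varepsilon\}B_m^*$ requires allowing some $B_i=\emptyset$, which clashes with the paper's convention that alphabets are nonempty; that is an imprecision in the paper's definition of ideal rather than a gap in your proof (without empty $B_i$, even $\{\varepsilon\}$ would fail to be an ideal).
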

In fact, there is a notion of ideal in well quasi orderings for which this result holds
in general. The property of the theorem is even equivalent to the fact that the
underlying quasi ordering has no infinite antichains, see
\citep{bonnet75,Fraisse:Theory-Relations} and \citep{ErdosTarski-mutually43}.  See
also~\citep{FGL09,LSdemyst15} for other perspectives.

\smallskip
We now prove the implication \impl{theo:it:sup}{theo:it:diag} of
Theorem~\ref{theo:decidability}.
In the following lemma, we use the definition of full trios as being closed
under rational transductions.  A subset $T\subseteq A^*\times B^*$ for
alphabets $A,B$ is called a \emph{rational transduction} if there is an alphabet
$C$, a regular language $R\subseteq C^*$, and morphisms $\alpha\colon
C^*\to A^*$, $\beta\colon C^*\to B^*$ such that we have $T=\{(\alpha(w), \beta(w)) \mid
w\in R\}$. For a language $L\subseteq A^*$ and a rational transduction
$T\subseteq A^*\times B^*$, we define
\[ TL=\{w\in B^* \mid \exists v\in L\colon (v,w)\in T\}. \]
It is well-known that a nonempty class of languages $\C$ is an effective full trio if and only if for
each $L\in\C$ and each rational transduction $T$, we have effectively
$TL\in\C$, see~\citep{Berstel:Transductions-context-free-languages:1979:a} for instance.

Observe that the set 
\[ D=\{(u,v)\in A^*\times A^* \mid v\preceq u\} \] is a rational
transduction, meaning that for every member $L$ of a full trio $\C$,
the language $\Dclosure{L}=DL$ is effectively contained in $\C$ as
well.

\begin{lemma}\label{lem:sup-to-diagonal}
  Let $\C$ be a full trio. If the SUP is decidable for $\C$, then so is the diagonal
  problem for~$\C$.
\end{lemma}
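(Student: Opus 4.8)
The plan is to reduce the diagonal problem to finitely many instances of the SUP, one for each linear ordering of the alphabet $A=\{a_1,\dots,a_n\}$. The conceptual core is the following equivalence, which I would isolate first: the diagonal problem for $L$ has answer ``yes'' if and only if there is a permutation $\pi$ of $\{1,\dots,n\}$ with
$$a_{\pi(1)}^*\cdots a_{\pi(n)}^*\subseteq\Dclosure{L}.$$
Granting this, the reduction is immediate. For a fixed $\pi$, let $R_\pi$ be the rational transduction that reads a word $w$, keeps a \subword of $w$ lying in $a_{\pi(1)}^*\cdots a_{\pi(n)}^*$, and renames $a_{\pi(i)}$ to a fresh symbol $b_i$. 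Since $\C$ is a full trio, $K_\pi:=R_\pi(L)\in\C$ is effectively computable, and by construction $K_\pi\subseteq b_1^*\cdots b_n^*$, so $K_\pi$ is a legal SUP input. The point to stress is that, although $K_\pi$ equals the renaming of $\Dclosure{L}\cap a_{\pi(1)}^*\cdots a_{\pi(n)}^*$, it is obtained by applying a transduction to $L$ \emph{directly}, so we never compute $\Dclosure{L}$ (which we cannot). As $K_\pi$ is downward closed, $\Dclosure{K_\pi}=K_\pi$, which equals $b_1^*\cdots b_n^*$ exactly when $a_{\pi(1)}^*\cdots a_{\pi(n)}^*\subseteq\Dclosure{L}$. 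Hence the algorithm computes each $K_\pi$, queries the SUP oracle, and answers ``yes'' iff some query does; trying all $n!$ orderings gives a deterministic procedure, and guessing $\pi$ turns it into a single nondeterministic (polynomial-time, if the trio operations are) reduction.

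It remains to prove the equivalence. The direction from right to left is elementary: if $a_{\pi(1)}^*\cdots a_{\pi(n)}^*\subseteq\Dclosure{L}$, then for every $k$ the word $a_{\pi(1)}^k\cdots a_{\pi(n)}^k$ is a \subword of some $w\in L$, and such a $w$ contains at least $k$ occurrences of every letter, so $(k,\dots,k)$ is dominated by a Parikh vector of $L$. As $k$ is arbitrary, the diagonal problem answers ``yes''.

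For the forward direction I would use the ideal decomposition (Theorem~\ref{theo:ideals}). Assume the diagonal problem answers ``yes'' and pick, for each $k$, a word $w_k\in L$ with at least $k$ occurrences of every letter. All $w_k$ lie in $\Dclosure{L}$, which is a finite union of ideals, so some single ideal $I=B_0^*\{c_1,\varepsilon\}B_1^*\cdots\{c_m,\varepsilon\}B_m^*$ contains $w_k$ for infinitely many $k$. In $I$ each singleton $c_j$ contributes at most one occurrence, so any letter occurring unboundedly must lie in one of the starred alphabets $B_j$; since every letter occurs unboundedly across the chosen $w_k$, we get $\bigcup_j B_j=A$. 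I then convert $I$ into an ordering: let $p(a)$ be the least index $j$ with $a\in B_j$, and let $\pi$ list the letters by increasing $p$, breaking ties arbitrarily. Routing all copies of a letter $a$ into the block $B_{p(a)}$ shows $a_{\pi(1)}^*\cdots a_{\pi(n)}^*\subseteq I\subseteq\Dclosure{L}$, because letters sharing a block may be placed in that block's star in any order, and ordering by $p$ respects the left-to-right order of the blocks. This yields the missing permutation and completes the equivalence.

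The main obstacle is exactly this forward direction: turning an \emph{unordered} abundance of letters (the diagonal condition) into an \emph{ordered} block pattern sitting inside the downward closure. This is where the ideal structure is essential, since a direct attempt to sort the letters of $w$ by a rational transduction fails, sorting requires comparing unbounded counts and so is not a rational relation. A secondary point to verify carefully is that the extraction-and-renaming map $R_\pi$ really is a full-trio operation and that $K_\pi$ is built from $L$ without ever materializing $\Dclosure{L}$, which is what keeps the reduction effective (and polynomial). I would also check the degenerate cases (empty $L$, or letters of $A$ absent from $L$, where the diagonal trivially fails) and confirm that finitely many permutations suffice, so that the deterministic version terminates.
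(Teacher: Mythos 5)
Your proof is correct and follows essentially the same route as the paper's: the same key equivalence between the diagonal property and the existence of an ordering with $a_{\pi(1)}^*\cdots a_{\pi(n)}^*\subseteq\Dclosure{L}$, established via the ideal decomposition of Theorem~\ref{theo:ideals} and the same ``order the letters by the first starred block in which they appear'' construction, followed by guessing the ordering and querying the SUP. Your only deviation is presentational (and arguably cleaner): you obtain $K_\pi$ by applying a single rational transduction to $L$ rather than writing it as $\Dclosure{L}\cap b_1^*\cdots b_n^*$, which makes explicit that the reduction never materializes a finite automaton for $\Dclosure{L}$.
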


\begin{proof}
  Let $L\subseteq A^*$ with $A=\{a_1,\ldots,a_n\}$. We say that $L$
  satisfies the \emph{diagonal property} if for each $m\in\N$, the
  vector $(m,\ldots,m)$ (with $|A|$ entries) is dominated by some
  vector in the Parikh image of~$L$.  We claim that $L$ satisfies the
  diagonal property if and only if we can order the symbols of $A$ as
  $A=\{b_1\le\cdots\le b_n\}$ such that $b_1^*\cdots b_n^*\subseteq
  \Dclosure{L}$. The ``if'' direction is immediate, so suppose $L$
  satisfies the diagonal property.

  Since $L$ satisfies the diagonal property, so does
  $\Dclosure{L}$.  According to Theorem~\ref{theo:ideals}, we can
  write $\Dclosure{L}$ as a finite union of ideals. This means that
  at least one of these ideals has to satisfy the diagonal property. Let
  $I=C_0^*\{c_1,\varepsilon\}C_1^*\cdots \{c_m,\varepsilon\}C_m^*$ be
  one such ideal. Then we have $C_0\cup \cdots\cup C_m=A$, since any
  element of $A\setminus (C_0\cup \cdots\cup C_m)$ can occur at most
  $m$ times in words in $I$. Hence, if we order the elements of $A$
  linearly by picking first the elements of $C_0$, then those of
  $C_1\setminus C_0$, and so forth, the resulting ordering $A=\{b_1\le \cdots\le
  b_n\}$ clearly satisfies $b_1^*\cdots b_n^*\subseteq I\subseteq
  \Dclosure{L}$. This proves our claim.

  Assume that $\C$ is a full trio for which the SUP is decidable. We
  show that the diagonal problem is decidable for $\C$ as well. Let
  $L\subseteq A^*$ be a language. .

  We guess a linear ordering $A=\{b_1\le \cdots \le b_n\}$. Notice
  that since $\C$ is closed under intersection with regular sets, the
  language $K=\Dclosure{L}\cap b_1^*\cdots b_n^*$ is effectively
  contained in $\C$.  Then, $K$ is an instance of the SUP. By our
  claim, $L$ satisfies the diagonal property if and only if there
  exists a linear ordering $A=\{b_1\le \cdots \le b_n\}$ such that
  $K=\Dclosure{L}\cap b_1^*\cdots b_n^*$ is a positive instance of the
  SUP.
\end{proof}

This concludes the proof of \impl{theo:it:sup}{theo:it:diag} of
Theorem~\ref{theo:decidability}, so that we now have established that Items
\ref{theo:it:diag}, \ref{theo:it:sup} and \ref{theo:it:dc} are equivalent. It
remains to connect these problems with the separation problem.

The correctness proof of our reduction of the SUP to the separability problem
employs a characterization of separability by \ptl from
\citep{DBLP:conf/icalp/CzerwinskiMM13}.  For languages $K,L\subseteq A^*$, a
\emph{$(K,L)$-zigzag} is an infinite sequence of words $(w_i)_{i\in\N}$ such that
\begin{compactenum}[(i)] 
\item $w_i\preceq w_{i+1}$ for every $i\in\N$, 
\item $w_i\in K$ for every even $i$ and 
\item $w_i\in L$ for every odd $i$.
\end{compactenum}

\smallskip
The characterization for separability by \ptl is then as follows.

\begin{theorem}[\citeauthor{DBLP:conf/icalp/CzerwinskiMM13},
  \citeyear{DBLP:conf/icalp/CzerwinskiMM13}]
  \label{thm:zigzag}
Let $K,L\subseteq A^*$ be languages. Then $L$ and $K$ are separable by \ptl
if and only if there is no $(K,L)$-zigzag.
\end{theorem}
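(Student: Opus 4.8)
The plan is to derive both implications from machinery already in place: Theorem~\ref{theo:patterns} together with the auxiliary characterization established in its proof, namely that $K$ and $L$ are \emph{not} \ptl-separable if and only if for every $n\in\N$ there are words $v_n\in K$ and $w_n\in L$ with $v_n\sim_n w_n$. Since the statement is an equivalence, I would prove the two contrapositives separately: ``a $(K,L)$-zigzag exists $\Rightarrow$ not separable'' and ``not separable $\Rightarrow$ a $(K,L)$-zigzag exists.''

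For the first (easy) direction, let $(w_i)_i$ be a $(K,L)$-zigzag and suppose toward a contradiction that some \ptl{} $S$ separates them, i.e.\ $K\subseteq S$ and $S\cap L=\emptyset$. As a finite Boolean combination of piece languages, $S$ is a union of $\sim_n$-classes for some $n$. Now the sets of \subwords of length at most $n$ of the words $w_i$ form a nondecreasing sequence of subsets of the finite set $A^{\le n}$, because $w_i\preceq w_{i+1}$; hence this sequence stabilizes and there is an index $i_0$ with $w_i\sim_n w_j$ for all $i,j\ge i_0$. Picking an even $i\ge i_0$ gives $w_i\in K\subseteq S$ and $w_{i+1}\in L$ with $w_i\sim_n w_{i+1}$, so $w_{i+1}\in S\cap L$, a contradiction. (This stabilization in fact directly supplies the $\sim_n$-close pairs, so one could instead simply invoke the characterization recalled above.)

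For the second (hard) direction, I would route through Theorem~\ref{theo:patterns}: non-separability yields a common pattern $(\overrightarrow u,\overrightarrow B)$, so there are $(\overrightarrow u,\overrightarrow B)$-adequate sequences $(v_n)_n$ in $K$ and $(w_n)_n$ in $L$ with $v_n,w_n\in\Lang u B n$. The crucial device is an embedding lemma: if $v\in\Lang u B n$ and $w\in\Lang u B m$ with $m\ge |v|$, then $v\preceq w$. Indeed, writing $v=u_0\alpha_1u_1\cdots\alpha_pu_p$ and $w=u_0\beta_1u_1\cdots\beta_pu_p$ with $\alpha_k\in(\exactcontent{B}_k)^{n}$ and $\beta_k\in(\exactcontent{B}_k)^{m}$, each $\alpha_k$ is a word over $B_k$ of length at most $m$, so it embeds into $\beta_k$ by mapping its $j$-th symbol into the $j$-th block of $\beta_k$, which contains all of $B_k$; concatenating these embeddings with the identity on every $u_i$ yields $v\preceq w$. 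With this lemma I build the zigzag by greedy interleaving: set $z_0:=v_1\in K$, and having chosen $z_j$ of length $\ell_j$, pick an index $n_{j+1}>\ell_j$ and let $z_{j+1}$ be $w_{n_{j+1}}\in L$ if $j$ is even and $v_{n_{j+1}}\in K$ if $j$ is odd. The embedding lemma guarantees $z_j\preceq z_{j+1}$, and by construction the even-indexed members lie in $K$ and the odd-indexed ones in $L$, so $(z_j)_j$ is a $(K,L)$-zigzag.

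The hard part is the second direction, and the real obstacle is the gap between the \emph{approximate} information we start from --- pairs that merely agree on \subwords up to length $n$ --- and the \emph{exact} object we must produce, a genuine $\preceq$-increasing chain. The $\sim_n$-close pairs are not comparable in the \subword order in any controlled way, so they cannot be interleaved directly. Passing through the common pattern of Theorem~\ref{theo:patterns} is what resolves this: it forces both languages to contain words sharing a rigid block structure with the \emph{same} context words $u_i$ and alphabets $B_i$, and it is precisely this shared structure that makes the embedding lemma, and hence the interleaving, go through.
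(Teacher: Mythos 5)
Your proof is correct, but it takes a genuinely different route from the paper: the paper does not prove Theorem~\ref{thm:zigzag} at all, it imports it from \citep{DBLP:conf/icalp/CzerwinskiMM13} and uses it only as a black box in Section~\ref{sec:diagonal-algo}, whereas you derive it from Theorem~\ref{theo:patterns}. Both of your steps check out. The easy direction correctly combines the fact that a \ptl is a union of $\sim_n$-classes for some $n$ with the stabilization of the nondecreasing, finitely-valued sequence of sets of \subwords of length at most $n$ along the zigzag. The embedding lemma in the hard direction is sound: for $m\ge|v|$ each block $\alpha_k\in(\exactcontent{B}_k)^n$ of $v$ has length at most $m$ and embeds symbol-by-symbol into the $m$ factors of $\beta_k\in(\exactcontent{B}_k)^m$, each of which contains all of $B_k$, while the context words $u_i$ are shared literally; the greedy length-based interleaving then produces the zigzag. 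There is also no circularity, since the proof of Theorem~\ref{theo:patterns} in Section~\ref{sec:patterns} nowhere relies on Theorem~\ref{thm:zigzag}. As for what each approach buys: citing \citep{DBLP:conf/icalp/CzerwinskiMM13} keeps the zigzag characterization independent of the factorization-forest machinery behind Theorem~\ref{theo:patterns}, while your derivation makes the paper self-contained and makes explicit that the common-pattern characterization is the stronger of the two --- a common pattern yields a zigzag essentially for free, via your interleaving argument.
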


\begin{lemma}
  Let $\C$ be a full trio. If separability by \ptl is decidable for $\C$, then
  so is the SUP for $\C$.
\end{lemma}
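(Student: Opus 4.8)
The plan is to give a many-one reduction of the SUP to \ptl-separability and then apply the assumed separability decision procedure to the produced instance. Concretely, from an SUP instance $L\subseteq b_1^*\cdots b_n^*$ I would construct, using the closure of $\C$ under rational transductions, two languages $I,E\in\C$ with $I\cap E=\emptyset$, and argue that $I$ and $E$ are \emph{not} separable by \ptl exactly when $L$ is a positive instance of the SUP, i.e.\ when $\Dclosure{L}=b_1^*\cdots b_n^*$. By Theorem~\ref{thm:zigzag} this reduces to showing that an $(E,I)$-zigzag exists if and only if this diagonal condition holds; the separability oracle then decides the SUP.

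The guiding observation is that on $b_1^*\cdots b_n^*$ the \subword order coincides with coordinatewise domination of Parikh images: $b_1^{k_1}\cdots b_n^{k_n}\preceq b_1^{k_1'}\cdots b_n^{k_n'}$ iff $k_i\le k_i'$ for all $i$. Hence a zigzag is nothing but an infinite $\preceq$-increasing chain of Parikh vectors alternating between the two languages. Moreover $L$ is a positive SUP instance iff the Parikh image of $L$ dominates $(m,\dots,m)$ for every $m$, which, since $\preceq$ is a well-quasi-order (Higman's Lemma), is equivalent to the existence of a single $\preceq$-increasing chain $v_1\preceq v_2\preceq\cdots$ in $L$ \emph{all of whose coordinates tend to infinity}. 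So the reduction must translate ``all coordinates unbounded along one chain'' into ``a zigzag exists'', and back.

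For the construction I would take $I$ and $E$ to be two copies of $L$ over a common alphabet, produced by rational transductions: a relabelling together with the insertion of a bounded number of fresh marker symbols, the markers serving only to force $I\cap E=\emptyset$ without blocking chains. These copies should be arranged so that every passage from one copy to the other in a zigzag is forced to increase \emph{every} coordinate, while the diagonal chain extracted above can realise all of these passages. Given such $I,E$, an infinite zigzag then witnesses that each coordinate grows without bound (positive SUP), and conversely the extracted chain supplies the alternating sequence; both implications are checked directly against Theorem~\ref{thm:zigzag}.

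The main obstacle — and the real content of the lemma — is exactly this last design: forcing a zigzag to witness \emph{simultaneous} unboundedness of all $n$ coordinates rather than mere unboundedness of the length or of some single coordinate. A coordinatewise-increasing chain can perfectly well leave some coordinates bounded, and I expect that neither a fixed regular partner language (which can detect only infiniteness, hence conflate the diagonal condition with nonemptiness) nor a naive ``shift by one in every coordinate'' suffices, since the backward step of the alternation returns the increment. Because rational transductions cannot reorder letters but only relabel them and insert boundedly many markers, both $I$ and $E$ are essentially confined to the sorted shape $b_1^*\cdots b_n^*$; so the gadget that ties the alternation to all-coordinate growth, together with the disjointness guarantee, is where the care is concentrated, and verifying its correctness in both directions via the zigzag characterization is the technical heart of the proof.
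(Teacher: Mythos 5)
Your high-level plan (reduce the SUP to separability and decide via the zigzag characterization of Theorem~\ref{thm:zigzag}) is the right one, but the proposal stops exactly where the proof has to start: you never exhibit the two languages, and you yourself identify the missing gadget as ``the real content of the lemma.'' Worse, you explicitly rule out the construction that actually works. For $L\subseteq a_1^*\cdots a_n^*$, the paper takes $I=T(\Dclosure{L})$, where $T$ is the rational transduction $a_1^{k_1}\cdots a_n^{k_n}\mapsto a_1^{2k_1}\cdots a_n^{2k_n}$, and pairs it with the \emph{fixed regular} language $K=\{a_1^{2k_1+1}\cdots a_n^{2k_n+1}\mid k_1,\ldots,k_n\ge 0\}$. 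Every coordinate of a word of $I$ is even and every coordinate of a word of $K$ is odd, so consecutive terms of any $(I,K)$-zigzag satisfy $|w_i|_a\le|w_{i+1}|_a$ with incongruent parities, hence $|w_i|_a<|w_{i+1}|_a$ for \emph{every} letter $a$, giving $|w_i|_a\ge i$. This is precisely the ``every passage increases every coordinate'' property you were looking for, achieved with a regular partner: a fixed regular language does not merely ``detect infiniteness'' here, because the coordinatewise parity separation is what ties the alternation to simultaneous growth. Conversely, when $\Dclosure{L}=a_1^*\cdots a_n^*$ the diagonal words $a_1^i\cdots a_n^i$ alternate between the two languages and form the required zigzag.

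Two further inaccuracies feed the dead end. First, rational transductions are not limited to relabelling plus inserting boundedly many markers: they can erase letters (which is how $\Dclosure{L}$ itself is obtained inside $\C$) and duplicate them (which is how the doubling $T$ is realized), and both are essential to the construction. Second, you propose two copies of $L$ itself; for the direction where the SUP holds you need the specific witnesses $a_1^i\cdots a_n^i$ (or their images) to lie in your languages, which is guaranteed for the downward closure but not for $L$. As written, the argument is a reduction schema with the crucial gadget left as a desideratum that your own (mistaken) constraints declare unachievable, so it does not establish the lemma.
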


\begin{proof}
  Let $A=\{a_1,\ldots,a_n\}$ and let $L\subseteq a_1^*\cdots a_n^*$ be
  a language in $\C$. Let $T\subseteq A^*\times A^*$ be the rational
  transduction
  \[ T=\{ ( a_1^{k_1}\cdots a_n^{k_n}, a_1^{2k_1}\cdots a_n^{2k_n})
  \mid k_1,\ldots,k_n\ge 0\}. \] Moreover, consider the regular
  language
  \[ K=\{ a_1^{2k_1+1}\cdots a_n^{2k_n+1} \mid k_1,\ldots,k_n\ge
  0\}. \] 

  We claim that $T(\Dclosure{L})$ and $K$ are inseparable by \ptl if
  and only if $\Dclosure{L}=a_1^*\cdots a_n^*$. We first show that
  this claim implies the lemma. Indeed, since $\C$ is a full trio, the
  language $T(\Dclosure{L})$ is effectively contained in $\C$. Since
  $K$ is a member of $\C$ (since every full trio contains the family
  of regular languages), the claim clearly implies the lemma.

  We now prove the claim. Suppose that $\Dclosure{L}=a_1^*\cdots a_n^*$. Then we have
  $T(\Dclosure{L})=(a_1 a_1)^*\cdots (a_n a_n)^*$ and the sequence
  $(w_i)_{i\in\N}$ with $w_i=a_1^i\cdots a_n^i$ is a
  $(T(\Dclosure{L}), K)$-zigzag, meaning that $T(\Dclosure{L})$ and $K$ are
  inseparable by \ptl, by Theorem~\ref{thm:zigzag}.

  Now suppose $T(\Dclosure{L})$ and $K$ are inseparable by \ptl. Then again by
  Theorem~\ref{thm:zigzag}, 
  there is a $(T(\Dclosure{L}), K)$-zigzag $(w_i)_{i\in\N}$. This
  means in particular that, for all $i$, we have $w_i\preceq w_{i+1}$, so
  that $|w_i|_a\le|w_{i+1}|_a$ for $a\in A$.  By construction of $T$
  and $K$, the numbers $|w_i|_{a}$ and $|w_{i+1}|_{a}$ are incongruent
  modulo $2$, which implies $|w_i|_a<|w_{i+1}|_a$ and therefore
  $|w_i|_a\ge i$. Moreover, since the sequence is a zigzag, we have
  $\{w_{2i} \mid i\ge 0\}\subseteq T(\Dclosure{L})$ and thus
  $\Dclosure{L}=a_1^*\cdots a_n^*$. This completes the proof of our
  lemma.
\end{proof}

\subsection{The Algorithm for Separability}\label{sec:separability-algo}

It only remains to prove the implication
\impl{theo:it:diag}{theo:it:sep} to finish the proof of
Theorem~\ref{theo:decidability}. In this section we show
that, for full trios with decidable diagonal problem, we can decide
separability by \ptl.  Fix two
languages $I$ and $E$ from a full trio $\C$ which has decidable diagonal problem.

To test whether $I$ is separable from $E$ 
by a piecewise testable language $S$, we run two
semi-procedures in parallel.  The \emph{positive} one looks for a witness that $I$
and $E$ are separable by \PTL, whereas the \emph{negative} one looks
for a witness that they are \emph{not} separable by a \PTL.
Since one of the semi-procedures always terminates, we have an
effective algorithm that decides separability. It remains to describe
the two semi-procedures.

\segment{Positive semi-procedure.}  We first note that, when a full
trio has decidable diagonal problem, it also has decidable emptiness.
Indeed, emptiness of a language $L\subseteq A^*$ can be decided
by taking the $A$-upward closure of $L$ (which can be effectively computed
from $L$, since it can be implemented by a rational transduction). In the resulting language,
the diagonal problem returns true if and only if $L$ is nonempty.

The positive semi-procedure enumerates all \PTL{}s over the union of
the alphabets of $I$ and $E$. For every \ptl $S$ it checks whether $S$
is a separator, so if $I \subseteq S$ and $E \, \cap \, S =
\emptyset$.  The first test is equivalent to $I \, \cap \, (A^*
\setminus S) = \emptyset$.  Thus both tests boil down to checking
whether the intersection of a language from the class $\C$ ($I$ or
$E$, respectively) and a regular language ($S$ and $A^* \setminus S$,
respectively) is empty. This is decidable, as $\C$ is effectively
closed under taking intersections with regular languages and has
decidable emptiness problem.

\segment{Negative semi-procedure.}
Theorem~\ref{theo:patterns} shows that there is always a finite witness
for inseparability: a pattern $(\overrightarrow{u}, \overrightarrow{B})$.
The negative semi-procedure enumerates all possible patterns and for
each one, checks the condition of 
Theorem~\ref{theo:patterns}.  We now show how to test this condition, \emph{i.e.}, for a pattern
$(\overrightarrow{u}, \overrightarrow{B})$ test whether \emph{for all} $n \in \N$ the intersection
of $\L(\overrightarrow{u}, \overrightarrow{B}, n)$ with both $I$ and $E$ is nonempty. Note that
the difficulty for testing this condition comes from the universal quantification over $n$.

\medskip
\noindent \textit{Checking the condition.}
Here we show for an arbitrary language from $\C$ how to check
whether for all $n \in \N$ its intersection with the
language $\L(\overrightarrow{u}, \overrightarrow{B}, n)$ is nonempty.
Fix $L \in \C$ over an alphabet $A$
and a pattern $(\overrightarrow{u}, \overrightarrow{B})$,
where $\overrightarrow{u} = (u_0, \ldots, u_k)$ and
$\overrightarrow{B} = (B_1, \ldots, B_k)$.
Intuitively, we just consider a diagonal problem with some artifacts:
we are counting the number of ``full occurrences'' of alphabets $B_i$ and checking whether
those numbers can simultaneously become arbitrarily big.

We show decidability of the non-separability problem by a formal
reduction to the diagonal problem. We perform a sequence of steps. In
every step we will slightly modify the considered language $L$ and
appropriately customize the condition to be checked.  Using the
closure properties of the full trio $\C$ we will ensure that the
investigated language still belongs to $\C$.

First we add special symbols $\$_i$, for $i \in 
\{1,\ldots,k\}$, which do not occur in $A$. These symbols are meant to count
how many times alphabet $B_i$ is ``fully occurring'' in the word. 
Then we will assure that words are of the form
\begin{equation*}
  u_0 \, (B_1 \cup \{\$_1\})^* \, u_1 \cdots u_{k-1} \, (B_k \cup \{\$_k\})^* \, u_k,
\end{equation*}

\noindent which already is close to what we need for the pattern.
Then we will check that between every two symbols $\$_i$ (with the
same $i$), every symbol from $B_i$ occurs,
so that the $\$_i$ are indeed counting the number of iterations
through the entire alphabet $B_i$.
Finally we will remove all the symbols except those from $\{\$_1,\ldots,\$_k\}$. The resulting language
will contain only words of the form $\$_1^* \$_2^* \cdots \$_k^*$ and the condition to be checked
will be exactly the diagonal problem.

More formally, let $L_0 := L$. We modify iteratively $L_0$, resulting in $L_1,
L_2, L_3$, and $L_4$. Each of them will be in $\C$ and we describe them next.

Language $L_1$ is the $\{\$_1, \ldots, \$_k\}$-upward closure of
$L_0$.  Thus, $L_1$ contains, in particular, all words
where the $\$_i$ are placed ``correctly'', \emph{i.e.}, in between two $\$_i$-symbols
the whole alphabet $B_i$ should occur.  However at this moment we do
not check it.  By closure under $B$-upward closures, which can be implemented
by rational transductions, language $L_1$
belongs to $\C$.

Note that $L_1$ also contains words in which the $\$_i$-symbols are
placed totally arbitrary. In particular, they can occur in the wrong
order. 
The idea behind $L_2$ is to consider only those words in which the $\$_i$-symbols
are placed at least in the ``right areas''.
Concretely, $L_2$ is the intersection of $L_1$ with the language
\begin{equation*}
  u_0 \, (B_1 \cup \{\$_1\})^* \, u_1 \cdots u_{k-1} \, (B_k \cup \{\$_k\})^* \, u_k.
\end{equation*}

\noindent
Since $\C$ is a full trio, it is  closed under intersection with regular
languages, whence $L_2$ belongs to $\C$.

Language $L_2$ may still contain words, such that in between two $\$_i$-symbols
not \emph{all} the symbols from $B_i$ occur. We get rid of these by
intersecting $L_2$ with the regular language
\begin{equation*}
  u_0  (\$_1 B_1^{\circledast})^* \$_1 u_1 \cdots u_{k-1} (\$_kB_k^{\circledast})^* \$_k u_k.
\end{equation*}
\noindent As such, we obtain $L_3$ which, again by
closure under intersection with regular languages, belongs to
$\C$.\footnote{Of course, one could also immediately obtain $L_3$ from
  $L_1$ by performing a single intersection with a regular language.}

Note that the intersection of $L = L_0$ with the
language $\L(\overrightarrow{u}, \overrightarrow{B}, n)$ is nonempty if and only if $L_3$
contains a word with precisely $n+1$ symbols $\$_i$ for every $i \in \{1, \ldots, k\}$. Indeed, $L_3$
just contains the (slightly modified versions of) words from $L_0$
which fit into the pattern and in which the symbols $\$_i$
``count'' occurrences of $B_i^\circledast$. Furthermore, for every word in
$L_3$, the word obtained by removing some occurrences of some $\$_i$
is in $L_3$ as well. It is thus enough to focus on the $\$_i$-symbols.
Language $L_4$ is therefore the $\{\$_1, \ldots, \$_k\}$-projection of $L_3$.
By closure under projections, language $L_4$ belongs to $\C$.
The words contained in $L_4$ are therefore of the form
\begin{equation*}
\$_1^{a_1} \, \cdots \, \$_k^{a_k},
\end{equation*}

\noindent such that there exists $w \in L$ with at least $a_i-1$ occurrences of
\infixes of $B_i^\circledast$ between $u_{i-1}$ and $u_i$.  Therefore,
$L\cap \L(\overrightarrow{u}, \overrightarrow{B}, n)$ is nonempty for all $n
\geq 0$ if and only if the tuple $(n, \ldots, n)$ is dominated by an element of the
Parikh image of $L_4$ for infinitely many $n \geq 0$.  This is
precisely the diagonal problem, which we know to be decidable for
$\C$.\qed







\section{Decidable Classes}\label{sec:feasibility}

In this section we show that separability by piecewise testable languages is decidable for a wide range of classes, by proving that
they meet the conditions of Theorem~\ref{theo:decidability}, in
particular, for context-free languages and languages of
labeled vector addition systems (which are the same as languages of
labeled Petri~nets). 

\segment{Effectively semilinear Parikh images.}
A number of language classes is known to exhibit effectively
semilinear Parikh images.  A set $S \subseteq \N^k$ is \emph{linear}
if it is of the form $$S = \{v + n_1 v_1 + \ldots + n_m v_m \mid n_1,
\ldots, n_m \in \N \}$$ \noindent for some \emph{base} vector $v \in
\N^k$ and \emph{period} vectors $v_1, \ldots, v_m \in \N^k$. A
\emph{semilinear} set is a finite union of linear sets.
We say that a full trio $\C$ exhibits \emph{effectively semilinear
Parikh images} if every language in $\C$ has a semilinear Parikh image
and one can compute a representation as a (finite) union of linear sets. 

Clearly, one can decide the diagonal problem for language classes with
effectively semilinear Parikh images. This amounts to checking whether
in a representation of the Parikh image of the given language, there
is some linear set in which each of the symbols occurs in some period
vector. 

Another option is to use Presburger logic.  Semilinear sets are
exactly those definable by Presburger logic.  Moreover, the
translation is effective.  Assume that $|A| = k$, so the
Parikh image $P$ of the considered language is a subset of $\N^k$ and
$\phi$ is a Presburger formula describing $P$ having $k$ free
variables.  Then
\[
\psi = \forall_{n \in \N}\ \exists_{x_1, x_2, \ldots, x_k}\ 
\big( \bigwedge_{i \in \{1, \ldots, k\}} (x_i \geq n) \big) \wedge 
\phi(x_1, x_2, \ldots, x_k)
\]
is true if and only if the diagonal problem for the considered
language is answered positively.  Decidability of the Presburger logic
finishes the proof of decidability of the diagonal problem.
We refer for the details of semilinear sets and Presburger
logic to~\citep{Ginsburg66}.

Examples of full trios with effectively semilinear Parikh images are
\emph{context-free languages}~\citep{DBLP:journals/jacm/Parikh66},
\emph{multiple context-free languages}~\citep{Seki1991}, languages of
\emph{reversal-bounded counter automata}~\citep{Ibarra1978}, \emph{stacked
counter automata}~\citep{Zetzsche2015a} and \emph{finite index matrix
languages}~\citep{DassowPaun1989}.

\segment{Higher-Order Pushdown Automata and Recursion Schemes.} Very
recently, it was shown that the diagonal problem is decidable for
higher-order pushdown automata~\citep{HagueKochemsOng2016} and even
higher-order recursion
schemes~\citep{ClementeParysSalvatiWalukiewicz2016}. Both of these
methods use an inductive approach: The diagonal problem for
order-$(n+1)$ pushdown automata (resp. schemes) is decided by
constructing an order-$n$ pushdown automaton (resp. scheme) that is
equivalent with respect to the diagonal problem.  Hence, the algorithm
arrives at order 0, for which the diagonal problem is easy to solve.

\segment{Languages of Labeled Vector Addition Systems and Petri Nets.}
A $k$-dimensional \emph{(labeled) vector addition system}, or
\emph{(labeled) VAS} $M = (A, T, \delta_0, \delta_1, \ell, s, t)$ over
alphabet $A$ consists of a finite set of \emph{transitions} $T$, a
labeling $\ell: T \to A \cup \{\eps\}$, mappings
$\delta_0,\delta_1:T\to\N^k$, and \emph{source} and \emph{target}
vectors $s, t \in \N^k$.  A labeled VAS defines a transition relation
on the set $\N^k$ of \emph{markings}. For two markings $u, v \in \N^k$
we write $u \trans{a} v$ if there is an $r \in T$ such that
$\delta_0(r)\le u$ and $v=u-\delta_0(r) + \delta_1(r)$ and $\ell(r) =
a$, where the addition and comparison of vectors is defined coordinate-wise.  For two
markings $u, v \in \N^k$ we say that \emph{$u$ reaches $v$ (via the
  word $w$)} if there is a sequence of markings $u_0 = u, u_1, \ldots,
u_{n-1}, u_n = v$ such that $u_i \trans{a_i} u_{i+1}$ for all $i \in
\{0, \ldots, n-1\}$ and $w = a_0 \cdots a_{n-1}$.  For a given labeled
VAS $M$ the \emph{language} of $M$, denoted $L(M)$, is the set of all
words $w \in A^*$ such that source reaches target via $w$.  We note
that languages of labeled VASs are the same as languages of labeled
Petri nets.

Since labeled VAS languages are known to be a full
trio~\citep{Jantzen-rairo79}, we only need to prove decidability of
the diagonal problem.  For instance, this can be done by using the
computability of downward closures for VAS
languages~\citep{HabermehlMeyerWimmel2010}.

\smallskip
Here, we present an alternative approach to this problem.  We show
that the \emph{diagonal problem for VAS languages} is decidable by reduction to the
\emph{place-boundedness problem for VASs with one zero test}, which has been
shown decidable~\citep{vaszero10,DBLP:journals/corr/abs-1205-4458}.  A
$k$-dimensional \emph{(labeled) vector addition system with one
  zero-test}, or \emph{(labeled) VAS${}_z$} over the alphabet $A$ is a
tuple $M = (A, T, r_z, \delta_0, \delta_1, \ell, s, t)$ such that
$(A,T,\delta_0,\delta_1,\ell,s,t)$ is a VAS and additionally $r_z\in T$
is a distinguished \emph{zero-test transition}. The semantics of a
VAS${}_z$ differs slightly from that of VASs.  For two markings $u, v \in \N^k$ in
a VAS${}_z$ with $u=(u_1,\ldots,u_k)$, we write $u \trans{a} v$ if
there is $r \in T$ such that \begin{enumerate*}[(i)] \item
  $\delta_0(r)\le u$, \item $v=u - \delta_0(r)+\delta_1(r)$, \item if
  $r=r_z$, then $u_1=0$, and \item $\ell(r) = a$.\end{enumerate*}
Then, reaching another marking via a word is defined accordingly as in
a VAS.  In \citep{vaszero10,DBLP:journals/corr/abs-1205-4458}, it was shown that
given a $k$-dimensional VAS${}_z$ and an $i\in\{1,\ldots,k\}$, it is
decidable whether for every $n\in\N$, there is a reachable marking
$u=(u_1,\ldots,u_k)$ with $u_i\ge n$.  This is known as the
\emph{place-boundedness problem}.

\smallskip To set up our reduction, we start from a VAS $M$. Our objective
is to build from $M$ a VAS$_z$ $M_z$, such that $M_z$ is unbounded in
some distinguished coordinate if and only if $L(M)$ has the diagonal
property. To build $M_z$ from $M$, we proceed in two steps.

\begin{description}[leftmargin=*]
\item [Step 1] The first step consists in modifying the VAS~$M$ to get another
  VAS $M'$, in the following
  way.
  \begin{itemize}[label=--,topsep=0pt,leftmargin=*]
  \item First, it is well-known that any VAS can be turned into one
    generating the same language and having $(0,\ldots,0)\in\N^k$ as target
    marking~\citep{Hack1976}, meaning we may assume that the target
    is zero.

  \item Second, we introduce a new \emph{sum coordinate}, say as
    coordinate $1$: it is easy to modify the VAS so that in the new
    first coordinate, one always has the sum of all original coordinates
    of $M$.

  \item Finally, we add for each $a\in A$ a \emph{symbol coordinate}, which
    counts how many times we read symbol~$a$.  That is, for every
    transition which is labeled by $a \in A$, we add $1$ in the
    symbol-coordinate corresponding to $a$ and $0$ in the
    symbol-coordinates corresponding to other symbols. Hence, the set of
    symbol-coordinates always contains the Parikh image of the read
    prefix.
  \end{itemize}

\item [Step 2]
  In the second step, we turn the VAS $M'$ into a VAS${}_z$ $M_z$. For
  this, we add,
  \begin{itemize}[label=--,topsep=0pt,leftmargin=*]
  \item two \emph{mode coordinates}: the VAS$_z$ $M_z$ will be able to switch
    from the first mode to the second mode (but not the other way around), and

  \item a \emph{minimum coordinate}, counting the minimum of all
    \emph{original} coordinates. The intention is that this specific coordinate will be unbounded
    if and only if the original VAS $M$ has the diagonal property.
  \end{itemize}

  More precisely:
  \begin{enumerate}
  \item The initial marking is enriched by $1$ on the first mode
    coordinate and by $0$ in the second one.

  \item All transitions of $M'$ are changed so that they subtract $1$
    from the first mode coordinate and add~1 to it as well (\emph{i.e.},
    on the first mode coordinate, $\delta_0(r)$ is $-1$ and
    $\delta_1(r)$ is $1$). This means that the resulting transitions do not modify
    the mode coordinates, but can be fired only in the first mode.

  \item The new \emph{zero-test transition} $r_z$ tests to 0 the sum
    coordinate that was introduced in Step 1. Moreover, it only subtracts $1$
    from the first mode coordinate and adds $1$ to the second mode
    coordinate. In other words, it is applicable only in the first mode,
    and switches to the second mode.

  \item Finally, we add a new transition $\bar{r}$, labeled by
    $\varepsilon$, which subtracts $1$ from each symbol coordinate and adds
    $1$ to the minimum coordinate. Moreover, $\bar{r}$ subtracts $1$
    from the second mode coordinate and adds $1$ to it (in other words,
    it does not affect the mode, but applies only in the second~one).
  \end{enumerate}

\end{description}
Let us argue that the resulting VAS${}_z$ $M_z$ is unbounded in the minimum
coordinate if and only if $L(M)$ has the diagonal property. If $L(M)$
has the diagonal property, then for each $n$, we find a run in $M$
reading a word $w$ whose Parikh image dominates $(n,\ldots,n)$. Hence,
we can take the corresponding run in the VAS${}_z$, then fire $r_z$
once and then fire $\bar{r}$ exactly $n$ times to get the minimum
coordinate to $n$.

\smallskip On the other hand, if the minimum coordinate is
unbounded in the VAS${}_z$ $M_z$, then for each $n\ge 0$, there is a run in
which the minimum coordinate holds at least $n$. The minimum
coordinate can only have a nonzero content if the zero-test
transition had been applied beforehand, so that this run has to
correspond to a run in $M$ that arrives at $(0,\ldots,0)\in\N^k$ (since the
zero-test is performed on the sum coordinate), and
reads a word whose Parikh image dominates $(n,\ldots,n)$. We have
therefore reduced the diagonal problem for VAS languages to the
place-unboundedness problem for VAS${}_z$. The former is therefore
decidable.


\segment{Effective Parikh equivalence.} Let us mention another obvious
consequence of our characterization. Assume separability by \ptl is
decidable for a full trio $\D$. Furthermore, suppose that for each
given language~$L$ in a full trio $\C$, one can effectively produce a
Parikh equivalent language in $\D$.  Then, separability by \ptl is
also decidable for $\C$. For example, this means that separability by
\ptl is decidable for \emph{matrix languages}~\citep{DassowPaun1989}, a
natural language class that generalizes VAS languages and context-free
languages. Indeed, it is well-known that given a matrix language, one can
construct a Parikh equivalent VAS language~\citep{DassowPaun1989}.

\segment{Mixed instances.} Our result further yields that if
separability by \ptl is decidable for each of the full trios $\C$ and
$\D$, then it is also decidable whether given $K\in\C$ and $L\in\D$
are separable by a \ptl.
Indeed, if the diagonal problem is decidable for $\C$ and for $\D$,
then combining the respective algorithms yields decidability of the
diagonal problem for the full trio $\C\cup\D$.  For instance,
separability of a context-free language from the language of a labeled
vector addition system is decidable.
\begin{corollary}
  Let $\C$ and $\D$ be full trios. If separability by \ptl is
  decidable for $\C$ and for $\D$, then it is also decidable for $\C
  \cup \D$.
\end{corollary}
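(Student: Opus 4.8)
The plan is to use Theorem~\ref{theo:decidability}, which reduces decidability of separability by \ptl to decidability of the diagonal problem. Specifically, I would prove that if the diagonal problem is decidable for $\C$ and for $\D$, then it is decidable for the union class $\C \cup \D$; combined with the equivalence \impl{theo:it:diag}{theo:it:sep} already established for full trios, this yields the corollary, provided that $\C \cup \D$ is itself a full trio.

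First I would verify that $\C \cup \D$ is a full trio. Being nonempty is clear since both $\C$ and $\D$ are nonempty. For each of the three defining operations ($B$-projection, $B$-upward closure, and intersection with regular languages), closure of $\C \cup \D$ follows directly from closure of each of $\C$ and $\D$ separately: any $L \in \C \cup \D$ lies in one of the two classes, and applying the operation keeps it in that same class, hence in the union. Effectiveness is likewise inherited, since we can determine (from the given representation) which class a language belongs to and invoke the corresponding effective procedure.

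Next I would establish decidability of the diagonal problem for $\C \cup \D$. Given an input language $L \in \C \cup \D$, the representation tells us whether $L \in \C$ or $L \in \D$ (or both). In either case we simply run the corresponding decision procedure: if $L \in \C$ we use the algorithm witnessing decidability of the diagonal problem for $\C$, and symmetrically for $\D$. Since the diagonal problem is a purely semantic question about the Parikh image of $L$, the answer does not depend on which class we regard $L$ as belonging to, so this combined procedure is correct. This gives item~\ref{theo:it:diag} of Theorem~\ref{theo:decidability} for the full trio $\C \cup \D$.

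Finally, applying the implication \impl{theo:it:diag}{theo:it:sep} of Theorem~\ref{theo:decidability} to the full trio $\C \cup \D$ yields that separability of $\C \cup \D$ by \ptl is decidable, which is exactly the statement of the corollary. I do not anticipate a genuine obstacle here: the only point requiring mild care is the bookkeeping that $\C \cup \D$ indeed satisfies the full-trio axioms \emph{effectively}, so that Theorem~\ref{theo:decidability} applies to it; the core reduction is the trivial observation that a decision procedure for a union of classes can branch on which class the input inhabits. The reduction to separability of a single context-free language from a labeled VAS language then follows by instantiating $\C$ as the context-free languages and $\D$ as the labeled VAS languages, both of which were shown to have decidable diagonal problem.
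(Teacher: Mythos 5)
Your proposal is correct and follows exactly the paper's own (very brief) argument: the paper likewise observes that $\C\cup\D$ is a full trio and that combining the two decision procedures for the diagonal problem yields decidability of the diagonal problem for $\C\cup\D$, whence Theorem~\ref{theo:decidability} gives the corollary. Your additional bookkeeping about effectiveness of the full-trio operations for the union class is a reasonable elaboration of what the paper leaves implicit.
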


\section{Undecidable Classes}
\label{sec:undecidable-classes}

Given the fact that separability by \ptl is decidable for context-free
languages, the question arises whether the same is true of the
context-sensitive languages. Here we show that this is \emph{not} the case,
because of the following
reasons. 
First of all, context-sensitive languages are closed under
complementation. Therefore, decidable separability of
context-sensitive languages by \ptl would imply that it is decidable
to test if a given context-sensitive language is a \ptl.  The latter
question, however, is already undecidable for context-free languages
(Corollary~\ref{cor:cfl-testing}). In this section we provide these
observations with more detail and also exhibit a more general
technique to show undecidable separability by \ptl.


Thomas Place pointed out to us that, by a
proof that follows similar lines to the proof of Greibach's
Theorem~\citep{Greibach-mst68}, one can show that testing whether a
given context-free language is piecewise testable is undecidable
\citep{Place-personal15}. We give the proof below
(Proposition~\ref{prop:Lab-undecidable}).

We noticed that the proof he gave us even applies to all full trios
that contain the language,
\[
  L_{ab} = \{a^n b^n \mid n \geq 0\}.
\]
Note that this means that for \emph{all the language classes mentioned
  in Section~\ref{sec:feasibility}}, for which \ptl-\emph{separability} is
\emph{decidable}, it is \emph{undecidable} whether a given language
\emph{is} a \ptl.

We introduce some additional material to allow for this
generalization.  Given a language $L$, the \emph{full trio generated
  by $L$} is the smallest full trio containing $L$, which we denote by
$\T(L)$. Since the intersection of full trios is a full trio
and every full trio includes the regular languages, such a class
indeed exists. Observe that if $L\ne\emptyset$, then $\T(L)$
consists of precisely those languages of the form $RL$, where $R$ is a
rational transduction. In particular, a language $RL$ in $\T(L)$
can be denoted using a representation for $R$. Furthermore, note that
$\T(L)$ is always closed under union, because $RL\cup SL=(R\cup
S)L$ and the union of rational transductions is again a rational
transduction.
\begin{lemma}\label{lem:universality-undecidable}
  The universality problem is undecidable for $\T(L_{ab})$.
\end{lemma}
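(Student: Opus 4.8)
The plan is to show that universality for $\hat{\M}(L_{ab})$ is undecidable by reducing from a known undecidable problem about context-free grammars or two-counter/Turing machines, exploiting the fact that $L_{ab} = \{a^nb^n \mid n \ge 0\}$ together with rational transductions can encode the intersection-emptiness or universality obstructions that make context-free problems undecidable. Recall from the excerpt that every language in $\hat{\M}(L_{ab})$ has the form $RL_{ab}$ for a rational transduction $R$, and that $\hat{\M}(L_{ab})$ is closed under union. The natural source of undecidability is that context-free languages (or even the simpler languages built from $L_{ab}$ via transductions) can simulate computations whose validity is undecidable, and universality is exactly the question of whether the complement (the set of \emph{invalid} computations) is everything.

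Concretely, the approach I would take is a reduction from the undecidability of the universality problem for context-free languages, or more self-containedly, from the Post Correspondence Problem or the halting problem for Turing machines via the standard ``invalid computation histories'' construction. First I would fix an instance of the undecidable problem (say a Turing machine $\mathcal{T}$). I would encode computations of $\mathcal{T}$ as words over a suitable alphabet, where a valid accepting computation is a sequence of configurations $C_0 \# C_1 \# \cdots \# C_m$. The key observation is that the set of words that are \emph{not} valid accepting computations is expressible in $\hat{\M}(L_{ab})$: a word fails to be a valid computation for one of finitely many local reasons (wrong initial configuration, wrong final configuration, or some $C_i \to C_{i+1}$ violates the transition rules), and each ``mismatch between consecutive configurations'' condition is captured by a language of the form $RL_{ab}$ — the $a^nb^n$ structure is precisely what lets a transduction compare two blocks of equal length and detect a discrepancy at corresponding positions. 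Taking the union over all these finitely many failure types (using closure under union) gives a single language $U \in \hat{\M}(L_{ab})$ consisting of exactly the non-computations.

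The crux of the argument is then: $U$ is universal (equals the set of all words over the relevant alphabet) if and only if $\mathcal{T}$ has \emph{no} accepting computation, i.e., iff $\mathcal{T}$ does not halt on the empty input. Since the halting problem is undecidable, universality for $\hat{\M}(L_{ab})$ is undecidable. I would need to verify carefully that the ``bad pair of adjacent configurations'' language really lies in $\hat{\M}(L_{ab})$, i.e., that a single application of a rational transduction to $L_{ab}$ suffices to detect a local mismatch; here the rational transduction reads the whole word, nondeterministically guesses a position $i$ of a mismatch and the surrounding context, and uses the $a^nb^n$ matching to enforce that the two compared blocks have the correct aligned offset. One should also include the easy failure modes (syntactic malformedness, wrong endpoints) via purely regular languages, which are trivially in $\hat{\M}(L_{ab})$.

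The step I expect to be the main obstacle is the precise encoding ensuring that the single copy of $L_{ab}$, combined with \emph{one} rational transduction, is expressive enough to compare two consecutive configuration blocks at aligned positions — rational transductions are single-valued in a weak sense and cannot, on their own, check two independent equal-length constraints, so the encoding of configurations and the choice of what the $a^nb^n$ counts must be arranged so that exactly one length-matching constraint per failure type does the job. Getting this alignment right (and confirming that no stronger closure than ``$RL_{ab}$ plus union'' is implicitly used) is where the real care is needed; the remainder of the reduction is the routine ``invalid computation histories'' bookkeeping.
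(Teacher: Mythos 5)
Your proposal takes essentially the same approach as the paper: the paper reduces from PCP, constructing for each PCP homomorphism $\gamma$ a rational transduction $R_\gamma$ with $R_\gamma L_{ab}=\{uv \mid u\in A^+,\ v\in\{0,1\}^*,\ v\neq\gamma(u)\}$ (using the $a^nb^n$ input exactly as you describe, to align the single mismatch position), and then takes the union of $R_\alpha L_{ab}$, $R_\beta L_{ab}$ and a regular language of malformed words, which is universal iff the PCP instance has no solution. The alignment difficulty you flag is precisely why the paper picks PCP rather than computation histories: there the transducer consumes the $a$-block while emitting $u$ (counting $|\gamma(u)|$) and the $b$-block while emitting $v$, so only one length constraint is ever needed.
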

\begin{proof}
  We reduce Post's Correspondence Problem (PCP) to the (non-)universality problem of
  $\T(L_{ab})$\footnote{It is worth comparing this proof to that of
  \citep[Lemma~8.3]{Berstel:Transductions-context-free-languages:1979:a}}. An
  instance of PCP consists in an alphabet $A$,  with $A\cap\{0,1\}=\emptyset$, together with two morphisms
  $\alpha,\beta\colon A^*\to \{0,1\}^*$. The problem asks whether there is a word
  $w\in A^+$ such that $\alpha(w)=\beta(w)$.

  Suppose we are given a PCP instance. From $\alpha$ and $\beta$, we build a language
  $M_{\alpha,\beta}\in\T(L_{ab})$ over some alphabet $B$ such that $M_{\alpha,\beta}=B^*$ iff there
  exists no $w\in A^+$ with $\alpha(w)=\beta(w)$.
  Our language $M_{\alpha,\beta}$ will be of the form $R_\alpha L_{ab}\cup R_\beta L_{ab}\cup K$, where
  $R_\alpha,R_\beta$ are rational transductions and $K$ is regular. Since
  $\T(L_{ab})$ is effectively closed under union and includes the
  regular languages, it contains $M_{\alpha,\beta}$.

  Given a morphism $\gamma\colon A^*\to\{0,1\}^*$, we first describe the construction
  of $R_\gamma$ (which will be instantiated for $\gamma=\alpha,\beta$ to get $R_\alpha,R_\beta$). We want to construct a rational transduction
  $R_\gamma$ such that
  \begin{equation}
    R_\gamma L_{ab}=\{uv \mid u\in A^+,~v\in\{0,1\}^*,v\ne\gamma(u)
    \}.
    \label{eq:r_gamma-l_ab=uv-mid}
  \end{equation}

  Assume first that we are able to construct such a rational transduction $R_\gamma$ from $\gamma$. We argue that this provides the desired reduction.
  Indeed, let $B=A\cup\{0,1\}$ and let $K=B^* \setminus (A^+\{0,1\}^*)$. As explained
  above, $\T(L_{ab})$ effectively contains $ M_{\alpha,\beta}=R_\alpha L_{ab}~\cup~ R_\beta L_{ab} ~\cup~K$.
  Now, in view of the values of $R_\alpha L_{ab}$ and $R_\beta L_{ab}$ given
  by~\eqref{eq:r_gamma-l_ab=uv-mid},  there is no $w\in A^+$ with $\alpha(w)=\beta(w)$ if and only if $M_{\alpha,\beta}=B^*$.

  It remains to construct a rational transduction $R_\gamma$ satisfying
  \eqref{eq:r_gamma-l_ab=uv-mid}. The intuition is that one uses the input word
  $a^nb^n$ of $L_{ab}$ to produce all words $uv$ where the $n$-th position of $v$ differs
  from the $n$-th position of
  $\gamma(u)$. 
  Observe that $\gamma(u)\neq v$ means that
  \begin{enumerate}
  \item either $|\gamma(u)|> |v|$,
  \item or $|\gamma(u)|< |v|$,
  \item or $\gamma(u)$ and $v$ differ at some common position.
  \end{enumerate}
  Thus, the rational transduction $R_\gamma$ is the union of three rational transductions. We
  illustrate its construction for the third case. As it is well-known
  that rational transductions are realized by finite transducers, \emph{i.e.},
  finite automata reading an input and producing an output along
  transitions~\citep{Berstel:Transductions-context-free-languages:1979:a},
  we describe the third part of $R_\gamma$ as such a transducer. For our
  purpose, it suffices to describe its behavior on inputs of the form
  $a^nb^n\in L_{ab}$. On such an input, the transducer first
  outputs the prefix $u=xyz$ of $uv\in R_\gamma L_{ab}$ in three~steps:
  \begin{itemize}
  \item It outputs some word $x\in A^*$ while, at the same time,
    it reads $a^{|\gamma(x)|}$.
  \item It outputs a symbol $y\in A$ and remembers in its state a
    position $k$ in $\gamma(y)$ and the symbol $i$ in $\gamma(y)$ at position
    $k$.
  \item It outputs some word $z\in A^*$.
  \end{itemize}
  The transducer then produces the suffix $v=rst$ of
  $uv\in R_\gamma L_{ab}$, again in three steps:
  \begin{itemize}
  \item It outputs some word $r\in \{0,1\}^*$ while, at the same time, it reads
    $b^{|r|}$.
  \item It outputs a word $s$ that, at position $k$, has a symbol that
    differs from $i$.
  \item It outputs some word $t\in\{0,1\}^*$.
  \end{itemize}
  Since the input is of the form $a^nb^n$, the above procedure ensures that
  $|\gamma(x)| = |r|$ and therefore produces precisely the words $uv$
  such that $v$ and $\gamma(u)$ differ at some common position.
  Treating the other two cases accordingly yields a construction of
  $R_\gamma$ from $\gamma$, which concludes the proof.
\end{proof}

\begin{proposition}[\citeauthor{Place-personal15}, \citeyear{Place-personal15}]\label{prop:Lab-undecidable}
  Let $\C$ be a full trio that contains the language $L_{ab}$. Then,
  testing whether a given language from $\C$ is piecewise testable is
  undecidable.
\end{proposition}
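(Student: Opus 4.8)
The plan is to reduce the universality problem for $\hat{\M}(L_{ab})$, which Lemma~\ref{lem:universality-undecidable} shows to be undecidable, to the problem of testing whether a language in $\C$ is piecewise testable. The key observation that makes this work is that universality is a special case of a \ptl-testing question: the language $B^*$ is itself a piece language (indeed the trivial one, $B^* = B^*\varepsilon B^* \cdots$), hence piecewise testable. So if we can recognize \ptl membership, we can in particular detect universal languages \emph{provided} we can first rule out all the non-universal languages that happen to be piecewise testable. The trick is therefore to engineer, from an arbitrary instance $M$ of the universality problem, a new language $M'\in\C$ that is piecewise testable if and only if $M=B^*$.

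The main idea I would pursue is to take the language $M \subseteq B^*$ produced in the proof of Lemma~\ref{lem:universality-undecidable} and combine it with a gadget whose only purpose is to force non-piecewise-testability whenever $M$ is \emph{not} universal. Concretely, I would pick a fresh separator symbol $\#$ and build a language over $B\cup\{\#\}$ of the roughly the shape
\[
  M' = \bigl\{\, w \,\#\, u \mid w \in B^*,\ u \in a^n b^n \text{ for some } n,\ \text{ and } w \notin M \text{ forces the } a^nb^n \text{ part}\,\bigr\},
\]
arranged so that $M'$ is empty (hence trivially \ptl) exactly when $M = B^*$, and otherwise $M'$ contains a faithful copy of $L_{ab}=\{a^nb^n \mid n\ge 0\}$, which is the canonical example of a non-piecewise-testable language. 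Since $\C$ is a full trio it is closed under rational transductions and intersection with regular languages, so $M'$ can be built effectively from the representation of $M$ inside $\hat{\M}(L_{ab})\subseteq\C$ (using that $L_{ab}\in\C$ and that $\hat\M(L_{ab})$ is the smallest full trio containing it). The correctness argument splits into the two directions: if $M=B^*$ then the ``$w\notin M$'' condition never fires, $M'$ is empty, hence \ptl; if $M\neq B^*$ then some witness $w_0\notin M$ survives, and the resulting $\{w_0\}\#\,a^*b^*$-restricted fragment exposes an infinite $\preceq$-chain that prevents $M'$ from being a finite Boolean combination of piece languages.

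The key step, and the one I expect to be the main obstacle, is certifying that the residual language is genuinely \emph{not} piecewise testable. Emptiness trivially gives \ptl, but non-emptiness does not automatically give non-\ptl: one must verify that the $a^nb^n$ gadget truly defeats piecewise testability in the presence of all the other words of $M'$. The cleanest tool here is the pumping-style characterization of piecewise testable languages, namely that $\sim_n$-equivalence (same \subwords up to length $n$) cannot distinguish $a^nb^n$ from words like $a^nb^nab$ for large $n$, so no finite Boolean combination of piece languages can separate the two; I would isolate a copy of $L_{ab}$ inside $M'$ via intersection with a suitable regular language and projection, and then invoke the standard fact that $L_{ab}$ is not \ptl (since for every $n$ one has $a^nb^n \sim_n a^{n+1}b^n$ while membership in $L_{ab}$ differs). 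The care needed is to ensure these full-trio operations do not themselves alter the \ptl status of the object being tested — that is, the reduction must preserve the \ptl/non-\ptl dichotomy exactly, rather than merely preserving emptiness. Making that preservation airtight, so that \ptl-ness of $M'$ is equivalent to \emph{universality} of $M$ and not to some accidentally simpler property, is the delicate part of the construction.
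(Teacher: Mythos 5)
There is a genuine gap, and it is in the very first step of your construction. You want to build, effectively inside the full trio, a language $M'$ that is \emph{empty} exactly when $M=B^*$ and that contains a copy of $L_{ab}$ behind a witness $w_0\notin M$ otherwise. Any such $M'$ must shrink as $M$ grows, i.e.\ its definition must refer to the condition ``$w\notin M$''. That amounts to effectively complementing $M$, and full trios are \emph{not} closed under complement; in particular $\hat{\M}(L_{ab})$ offers no way to compute a representation of $B^*\setminus M$ from a representation of $M$ (if it did, universality would reduce to emptiness and the whole undecidability phenomenon would collapse). Positive operations available in a full trio --- rational transductions, unions, intersections with regular languages --- are all monotone in $M$, so no combination of them can produce a language that becomes empty precisely when $M$ becomes universal. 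The paper's construction avoids this entirely: it sets $L'=K\#A^*\cup A^*\#L$ for a \emph{fixed} non-\ptl language $K\in\hat{\M}(L_{ab})$, which uses $L$ only positively. Non-universality of $L$ is then exploited only in the \emph{correctness argument}, not in the construction: if some $w\notin L$ exists, the residual $L'(\#w)^{-1}$ equals $K$, and since \ptl is closed under residuals, $L'$ cannot be \ptl. No algorithm ever needs to find $w$ or to decide which case holds.

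A secondary issue: even granting the construction, your proposed method for certifying non-piecewise-testability --- ``isolate a copy of $L_{ab}$ inside $M'$ via intersection with a suitable regular language and projection'' --- does not go through as stated, because \ptl is not closed under intersection with arbitrary regular languages nor under projection, so \ptl-ness of $M'$ would not transfer to the isolated fragment. The closure property that does the job is closure under residuals (quotients by a fixed word), which is exactly what the paper uses; alternatively, your direct $\sim_n$-argument (exhibiting $\sim_n$-equivalent pairs straddling the language for every $n$) is sound and could replace it, but only once the construction itself is repaired along the lines above.
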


\begin{proof}
  Since $\T(L_{ab})$ is effectively contained in $\C$, it
  clearly suffices to show that piecewise testability is undecidable
  for $\T(L_{ab})$.

  Let $L$ be a language from $\T(L_{ab})$ over an alphabet $A$.
  We show that, if we can decide whether $L$ is piecewise testable,
  then we can also decide whether $L$ is universal. Since universality
  is undecidable for $\T(L_{ab})$ by Lemma~\ref{lem:universality-undecidable}, we have a contradiction.

  Fix $K$ as any language in $\T(L_{ab})$ that is not piecewise testable and $\#$ as a symbol that is not in $A$. We claim that the
  language $L' = K\#A^* \cup A^*\#L$  is piecewise testable if and
  only if $L$ is universal. Note that $L'$ belongs to
  $\T(L_{ab})$ since the latter is effectively closed under union.

  If $L$ is universal, then $L'=  A^*\#A^*$, which is trivially
  piecewise testable.

  If $L$ is not universal, assume by contradiction that $L'$ is
  piecewise testable. By hypothesis there is a word $w \not\in L$.  It is
  then immediate that $K = L' (\#w)^{-1}$ is also piecewise testable
  (as piecewise testable languages are closed under residuals) which
  is a contradiction by choice of $K$.
\end{proof}

\noindent Since the context-free languages are a full trio and contain
$L_{ab}$, we have the following as one example. However, note that
this is also true for all language classes mentioned in
Section~\ref{sec:feasibility}.
\begin{corollary}\label{cor:cfl-testing}
  Testing if a context-free language is piecewise testable is undecidable.
\end{corollary}

\section{Concluding Remarks}

Since the decidability results we presented are in strong
contrast with the remark of Hunt~III quoted in the introduction, we briefly
comment on this. What we essentially show is that undecidable
emptiness-of-intersection for a class $\C$ does not always imply
undecidability for separability of $\C$ with respect to some
nontrivial class of languages. In the case of separability with
respect to piecewise testable languages, the main reason is basically
that we only need to construct intersections of languages from $\C$
with languages that are regular (or even piecewise testable). Here,
the fact that such intersections can be effectively constructed are,
together with a decidable diagonal problem and some mild
closure properties, sufficient for decidability.

In terms of future work, we see many interesting directions and new questions.
A first set of problems concerns separability by \ptl and very related topics.
Which language classes have a decidable diagonal problem?
Can Theorem~\ref{theo:decidability} be extended to also give complexity guarantees?
In particular, although we now know that separability of context-free languages by
\ptl is decidable, we do not know what the complexity of the problem is. For example, it is not
known if this problem is elementary or not. Finally, can we find similar
characterizations for separability by subclasses of \ptl, as considered in
\citep{HofmanM-icdt15}?

Another interesting set of questions concerns possible generalizations of our result to broader classes of separators.
Recently, it was shown that \ptl is not the only nontrivial class for which separability is decidable beyond regular languages.
In~\cite{DBLP:conf/icalp/ClementeCLP17} and \cite{DBLP:conf/lics/CzerwinskiL17} decidability of separability by \emph{regular languages} was proven
for classes of integer VASS languages and one counter net languages, respectively. Both these language classes are full trios.
Can one formulate natural equivalent conditions (in the style of Theorem~\ref{theo:decidability})
not only for \ptl separability, but for regular separability or separability by other classes?
For example, all the cases of which we are aware seem to confirm the following conjecture: under mild robustness conditions, \emph{full trios have decidable regular separability if and only if they have decidable intersection emptiness problem}. Such a result would again be close to the remark of Hunt~III that we quoted in the introduction and would make it more precise.



\segment{Acknowledgments} We would like to thank Tom\'a\v{s}
Masopust for pointing us to \citep{DBLP:journals/jacm/Hunt82a} and
Thomas Place for pointing out to us that determining if a given
context-free language is piecewise testable is undecidable. We are
also grateful to the anonymous reviewers for many helpful remarks that
simplified proofs. 

\bibliographystyle{abbrvnat}
\bibliography{citat_new}

\end{document}